\title{Computational Model for Parsing Expression Grammars} 
\author{Alexander {Rubtsov}}{HSE University, Russia \and MIPT, Russia 
}{rubtsov99@gmail.com}{https://orcid.org/0000-0001-8850-9749}{  This paper was prepared within the framework of the HSE University Basic Research Program. Supported in part by RFBR grant 20–01–00645.}
\author{Nikita {Chudinov}}{Google, Switzerland }{imonory@yandex.ru}{}{}
 \authorrunning{A. Rubtsov and N. Chudinov} 
\keywords{PEG, formal languages, pushdown automata, two-way pushdown automata} 
\def\REG{\mathsf{REG}}
\def\Bool{\mathsf{Bool}}
\def\DCFL{\mathsf{DCFL}}
\def\PEL{\mathsf{PEL}}
\def\CL{{\mathscr L}}
\def\A{{\cal A}}
\let\es\varnothing
\let\epsilon\varepsilon
\let\eps\varepsilon
\def\rendmarker{{\vartriangleleft}}
\def\lendmarker{{\vartriangleright}}
	\def\yield{\xRightarrow{\phantom{\scriptstyle *\; }}}
	\def\yields{\xRightarrow{{\scriptstyle *\; }}}
	\def\ders{ \vdash^{\!\!\! {}^{ *}}}    
	\def\der{ \vdash }
    \let\mder\models
    \def\mders{ \models^{\!\!\! {}^{ *}}}
	\let\lto\leftarrow
	\let\rto\rightarrow
	\let\uto\uparrow
	\let\dto\downarrow
    \let\udto\updownarrow
	\let\hlto\hookleftarrow
	\let\hrto\hookrightarrow
	\let\hdto\hookdownarrow
\def\por{\mathop{/}}
\def\pnot{\text{\textbf{!}}}
\def\pfail{\text{\textbf{F}}}
\begin{document}

\maketitle

\begin{abstract}
We present a computational model for Parsing Expression Grammars (PEGs). The predecessor of PEGs top-down parsing languages (TDPLs) were discovered by A. Birman and J. Ullman in the 1960-s, B. Ford showed in 2004 that both formalisms recognize the same class named Parsing Expression Languages (PELs). A. Birman and J. Ullman established such important properties like TDPLs generate any DCFL and some non-context-free languages like $a^nb^nc^n$, a linear-time parsing algorithm was constructed as well. But since this parsing algorithm was impractical in the 60-s TDPLs were abandoned and then upgraded by B.~Ford to PEGs, so the parsing algorithm was improved (from the practical point of view) as well. Now PEGs are actively used in compilers (eg., Python replaced LL(1)-parser with a PEG one) so as for text processing as well. In this paper, we present a computational model for PEG, obtain structural properties of PELs, namely proof that PELs contain Boolean closure of regular closure of DCFLs and PELs are closed over left concatenation with regular closure of DCFLs. We present an extension of the PELs class based on the extension of our computational model. Our model is an upgrade of deterministic pushdown automata (DPDA) such that during the pop of a symbol it is allowed to return the head to the position of the push of the symbol. We provide a linear-time simulation algorithm for the 2-way version of this model, which is similar to the famous S. Cook linear-time simulation algorithm of 2-way DPDA.
\end{abstract}

\section{Introduction}
We present a computational model for Parsing Expression Grammars (PEGs) presented by B. Ford in~\cite{FordPEG2002}. The predecessor of PEGs top-down parsing languages (TDPLs) was discovered by A. Birman and J. Ullman in the 1960s (so as generalized TDPLs)~\cite{TDPL_BirmanUlman}. While the PEGs formalism has more operations, it has the same power as TDPLs and generalized TDPLs which was shown by B. Ford in~\cite{FordPEG2004}. We refer to the class of languages generated by PEGs (and TDPLs, and generalized TDPLs) as \emph{Parsing Expression Languages} (PELs).

Little is known about the structural properties of PELs. From the 60's it is known that PELs contain DCFLs as a subclass and some non-context-free languages like $a^nb^nc^n$ as well. A linear-time parsing algorithm (in RAM) had been constructed for TDPLs, but it was impractical in the 1960s since it required too much memory for memoization and TDPLs had been abandoned. 
B.~Ford upgraded the TDPLs formalism to PEGs and presented a linear-time practical algorithm in 2002~\cite{FordPEG2002}. From the Ford's construction of PEGs follows that PELs are closed over Boolean operations. It was shown in~\cite{Loff_PEG} that PEGs are somehow ``universal'', behave  as special kind of real-time Turing machine and no pumping lemma can exist for PELs. It was also shown in~\cite{Loff_PEG} that PELs contain such nontrivial languages as languages of powers~$P_k = \{a^{k^n} : n \geq 0\}$ for $k \geq 1$ and the language of palindromes that have length $2^{n+1}, n\geq 0$.
  Now PEGs are being actively used in compilers (eg., Python replaced an LL(1)-parser with a PEG one~\cite{PEP617}) so as for text processing as well. In this paper, we present a computational model for PELs and obtain some interesting properties for this class, analyze (some of) its subclasses, and generalize the PELs class as well.

A computational model for PELs was presented in~\cite{Loff_PEG}, but this model significantly differs from classical models of computations, so it is hard to clarify the place of PELs among known classes of formal languages, based on this model. So we present a simpler and more convenient model that discovers the place of PEGs in the variety of formal language classes. Namely, the computational model is a modified deterministic pushdown automaton (DPDA) that puts to the stack a symbol with the pointer of the head's position on the tape (from which the push has been performed). During the pop, the automaton has two options: either leave the head in the current position or move the head to the position stored in the pointer (retrieved during the pop of the symbol).
We call this model a deterministic pointer pushdown automaton (DPPDA).
This description of PELs from the automata point of view helped us to obtain other important results not only for the PELs but for the general area of formal languages as well. Namely, we prove that boolean closure of regular closure of DCFLs is linear-time recognizable (in RAM), what extends the nontrivial result by E.~Bertsch and M.-J.~Nederhof~\cite{Bertsch1999RegularCO} that regular closure of DCFLs is linear-time recognizable.

To describe our results we shall mention the following important results in the area of formal languages and automata theory. Donald Knuth invented LR($k$) grammars that describe DCFLs for $k \geq 1$ and were widely used in practice. It is easier to design an LL($k$) grammar for practical purposes, so despite the power of LR, LL grammars are widely used for parsing (and some artificial modification of recursive descent parsing as well). Top-down parsing languages (TDPLs, predecessor of PEGs) cover LL(1) grammars and even contain DCFLs as a subclass, but their linear-time parsing algorithm was impractical in the 1970s, so TDPLs had been abandoned till B. Ford upgraded them to PEGs and presented a practically reasonable linear-time parser (Packrat). So, linear-time recognizable classes of formal languages are used in compilers, and LR (DCFLs) parsers now compete with PEGs which cover a wider class of formal languages that is almost undiscovered. There are no comprehensive results on the structure of PELs, so we make a contribution to this open question. Another wide linear-time recognizable class of formal languages is languages recognizable by two-way deterministic pushdown automata (2DPDA). S.~Cook obtained in~\cite{Cook1970_2DPDA} a famous linear-time simulation algorithm for this model. There also was an amazing story about how D. Knuth used S.~Cook's algorithm to discover the Knuth-Morris-Pratt algorithm (\cite{KMP}, Section~7).

We modify 2DPDA in the same way as we did for DPDA: we add symbols to stack with a pointer that allows returning the head to the cell from which the push had been performed. S.~Cook's linear-time simulation algorithm applies to this model as well (with a little modification). So we extend the important class of formal languages (recognizable by 2DPDAs) preserving linear-time parsing. This extension can be used to generalize PEGs. Also, this algorithm provides another approach to linear time recognition of languages generated by PEGs described via DPPDAs. Note that there are not many structural results about PELs. Moreover, even equivalence of TDPLs and generalized TDPLs (with PEGs) had been proved by B.~Ford~\cite{FordPEG2004}  decades after these classes had been invented. In our opinion, one of the reasons for that is that TDPL-based formalisms are hard. So even the proof of inclusion DCFLs in PELs~\cite{TDPL_BirmanUlman} is complicated, while it directly follows from the equivalence of PEGs with our model.

So we hope that our model will raise interest in investigations of PELs and will help with these investigations as well. Our results also clarify the place of another interesting result (we also improved it, as described below). It was shown by E.~Bertsch and M.-J.~Nederhof~\cite{Bertsch1999RegularCO} that regular closure of DCFLs is linear-time recognizable. We show that this class is recognizable by DPPDAs which simplifies the original proof~\cite{Bertsch1999RegularCO} and shows the place of this class in the formal languages classes.

There are many linear-time recognizable classes of formal languages. Recently Rubtsov showed~\cite{RLinDLA} that Hibbard's hierarchy (the subclass of CFLs) is linear-time recognizable. So there are many open questions related to the systematization of linear time recognizable classes of formal languages and particularly the relation of Hibbard's hierarchy with languages recognizable by 1-2 DPPDAs.

\subsection{Results}

In this paper, we present a new computational model DPPDA which is equivalent to PEGs. We also consider the two-way model 2DPPDA and provide a linear time simulation algorithm for this model following S.~Cook's construction. Via DPPDA we show that the PEGs class is closed over left concatenation with regular closure of DCFLs, so PELs contain the regular closure of DCFLs as a subclass. With the linear-time simulation algorithm for 2DPPDA, we obtain another linear-time recognition algorithm for the regular closure of DCFLs and since PELs are closed over Boolean operation we prove that the Boolean closure of regular closure of DCFLs is linear-time recognizable. Note that the last result not only generalizes well known result of linear-time recognizability of regular closure of DCFLs~\cite{Bertsch1999RegularCO}, but also our proof is significantly simpler as well.




\subsection{Basic Notation}

We follow the notation from~\cite{Hopcroft1979} on formal languages, especially on context-free grammars (CFGs) and pushdown automata. We denote the input alphabet as $\Sigma$ and its elements (letters, terminals) are denoted by small letters $a, b, c,\ldots$, while letters $w, x, y, z$ denote words. The empty word is denoted by $\eps$.
 We denote nonterminals $N$ by capital letters $A, B, C,\ldots$, and $X, Y, Z$ can be used for both nonterminals and terminals.
The axiom is denoted by $S \in N$.
Words over the alphabet $N \cup \Sigma$ are called sentential forms and are denoted by small Greek letters.

\subsection{Informal Description of PEGs}

The formal definition of PEGs is not well intuitive, so we begin with an informal one that clarifies a simple idea behind this formal model. The intuition behind PEGs lies in recursive descent parsing.

One of the parsing methods for CF-grammars is a recursive descent parsing that is a process when the derivation tree is built top-down (starting from the axiom $S$) and then each nonterminal is substituted according to the associated function. A rollback is possible as well, where by rollback we mean the replacement of one production rule by another or even the replacement of the rule higher above the current node with the deletion of subtrees. This method is very general and we do not go deep into details. For our needs, we describe a recursive descent parsing of LL(1) grammars and its modification that defines PEGs.

For LL(1) grammar, the following assertion holds. Fix a leftmost derivation of a word $w\rendmarker = u a v\rendmarker$ and let $uA\alpha\rendmarker$ be a derivation step (here $\rendmarker$ is a right end marker of the input). The next leftmost derivation step is determined by the nonterminal $A$ and the terminal $a$, so the rule is the function $R(A, a)$. So, the recursive descent algorithm for an LL(1)-parser is as follows. An input $w\rendmarker$ is written in the one-way read-only tape called the \emph{input tape}. The pointer in the (constructing) derivation tree points to the leftmost nonterminal node (without children), initially the axiom $S$. This node is replaced according to the function~$R$. In the fixed above derivation step $uA\alpha\rendmarker$ the pointer is over the nonterminal $A$, $R(A, a) = xB\beta$, where $A \to xB\beta$ is a grammar rule. So, $xB\beta$ is glued into $A$ as a subtree, $x$ is a prefix of $av$ and the head of the input tape moves while scanning~$x$. If $R(A, a)$ does not contain a nonterminal, then (after replacement) the tree is traversed via DFS until the next (leftmost!) nonterminal is met. Each terminal during this traversal shifts the head of the input tape. If the symbol under the head differs from the traversed terminal, the input word is rejected. We illustrated the described process in Fig.~\ref{Fig:LL1Traversal}. Note that $u$, $x$, $\alpha$, $\beta$ are the subtrees and $u$, $v$, $x$, $v'$ in fact occupies several cells of the input tape.

\begin{figure}[ht]
\begin{center}    
	\begin{multicols}{2}
		\begin{tikzpicture}  
            [triangle/.style = {regular polygon, regular polygon sides=3, draw},
            level 1/.style={sibling distance=10mm, level distance=8mm},            
             level 2/.style={sibling distance=5mm, level distance=10mm}]
			\node   (S)   {$S$}  
            child  {node {$u$}}
            child  {node {$A$}                   
            }			                      			
			child {
					node {$\alpha$}
			};
            
            	\edef\sizetape{0.7cm}
            	\tikzstyle{tape}=[draw,minimum size=\sizetape]
            	\tikzstyle{head}=[arrow box,draw,minimum size=.5cm,arrow box
            	arrows={south:.25cm}] 
            	\tikzstyle{arrup}=[single arrow, draw, shape border rotate=90, fill=gray!30, single arrow head extend=.001cm, single arrow head indent=.01ex]
            	\tikzstyle{stack}=[rectangle split, rectangle split parts=#1,draw, anchor=center, minimum width = 1.5cm]
            	\tikzstyle{stack_cell}=[rectangle, draw, anchor=center, minimum width = 1.5cm ]


            	\begin{scope}[start chain=tape going right,node distance=-0.15mm]
            	    \node [on chain=tape,tape] at (-1,-3) (input) {\phantom{xx}$u$\phantom{xx}};
                        	    \node [on chain=tape,tape] {$a$};
                        	    \node [on chain=tape,tape] {\phantom{xxxx}$v$\phantom{xxxx}};
            		\node [on chain=tape,tape] {$\rendmarker$};		
		
            	\end{scope}
            	\newcommand{\head}[2]{\node [head, yshift=.27cm] at (tape-#1.north) (head#1) {#2};
            	\node [yshift=-.5cm] at (head#1.south)  (head#1_cpoint) {};	
            	}
                \head{2}{$A$}						
        \end{tikzpicture}
		\begin{tikzpicture}
            [level 1/.style={sibling distance=10mm, level distance=8mm},            
             level 2/.style={sibling distance=5mm, level distance=8mm}]
			\node   (S)   {$S$}  
            child  {node {$u$}}
            child  {node {$A$}
                    child{ node {$x$} }
                    child{ node {$B$} }
                    child{ node {$\beta$} }
            }			                      			
			child {
					node {$\alpha$}
			};
                        	\edef\sizetape{0.7cm}
                        	\tikzstyle{tape}=[draw,minimum size=\sizetape]
                        	\tikzstyle{head}=[arrow box,draw,minimum size=.5cm,arrow box
                        	arrows={south:.25cm}] 
                        	\tikzstyle{arrup}=[single arrow, draw, shape border rotate=90, fill=gray!30, single arrow head extend=.001cm, single arrow head indent=.01ex]
                        	\tikzstyle{stack}=[rectangle split, rectangle split parts=#1,draw, anchor=center, minimum width = 1.5cm]
                        	\tikzstyle{stack_cell}=[rectangle, draw, anchor=center, minimum width = 1.5cm ]


                        	\begin{scope}[start chain=tape going right,node distance=-0.15mm]
                        	    \node [on chain=tape,tape] at (-2,-3) (input) {\phantom{xx}$u$\phantom{xx}};
                        	    \node [on chain=tape,tape] {\phantom{xx}$x$\phantom{xx}};
                        	    \node [on chain=tape,tape] {$b$};
                                \node [on chain=tape,tape] {\phantom{xx}$v'$\phantom{xx}};
                        		\node [on chain=tape,tape] {$\rendmarker$};		
		
                        	\end{scope}
                        	\newcommand{\head}[2]{\node [head, yshift=.27cm] at (tape-#1.north) (head#1) {#2};
                        	\node [yshift=-.5cm] at (head#1.south)  (head#1_cpoint) {};	
                        	}
                            \head{3}{$B$}									
        \end{tikzpicture}
	\end{multicols}
\end{center}    
		\caption{Example of LL(1) recursive descent parsing}\label{Fig:LL1Traversal}
\end{figure}

So now we move to the description of PEGs via modification of recursive descent parsing. 
In the first example, we will provide similar PEG and CFG (Fig.~\ref{Fig:PEGvsCFG}) and explain their similarity and differences.

\begin{figure}[h]
    \begin{center}            
        \begin{minipage}{0.5\textwidth}          
		    \begin{multicols}{2}
             \begin{minipage}{0.5\textwidth}
                \hspace*{3em}\textbf{PEG}	
		        \vspace*{-5pt}		
		        \begin{align*}
		        	S &\lto AB \por BC \\
		        	A &\lto aA \por a\\
		        	B &\lto abb \por b \\ 
                    C &\lto cC \por \eps
		        \end{align*}
              \end{minipage}
			\columnbreak
            \begin{minipage}{0.5\textwidth}
                \hspace*{3em}\textbf{CFG}
                \vspace*{-5pt}			            
			    \begin{align*}
			    	S &\to AB \mid BC \\
			    	A &\to aA \mid a\\
			    	B &\to abb \mid b \\ 
                    C &\to cC \mid \eps 
			    \end{align*}	
         \end{minipage}	
		\end{multicols}
      \end{minipage}
    \end{center}
        \caption{PEG and CFG for comparison}\label{Fig:PEGvsCFG}
\end{figure}

PEGs look similar to context-free grammars, but the meaning of almost all concepts are different, therefore the arrow $\lto$ is used to separate the left part of a rule from the right part. The difference comes from the following approach to recursive descent parsing. 
We describe the PEG via the transformation of the CFG. Let us order all the rules of the CFG for each nonterminal. During recursive descent parsing, we will try each rule according to this order. 
If a failure happens, let us try the next rule in the order. If the last rule leads us to the failure too, propagate the failure to the parent and try using the next rule in the order on the previous tree level. So, that is the reason why all right-hand sides of the rules in PEG are separated by the delimiter $\por$, but not by $\mid$. The order of rules in PEGs matters, unlike CFGs. Consider the parsing (Fig.~\ref{Fig:PEGparsingEx}) of the word $aab$ by the PEG defined on Fig.~\ref{Fig:PEGvsCFG}.

	\begin{figure}

	\begin{multicols}{4}
		\begin{tikzpicture}
           [level 1/.style={sibling distance=10mm},            
            level 2/.style={sibling distance=5mm, level distance=10mm}]
			\node   (S)   {$S$}  
			child  {
				node  (A) {$A$}
				child{
					node	{$a$}
                }
                child{
                    node {$A$}
                    child{node {$a$}}
                    child{node {$A$}
                        child{node {$\underline a$}}
                        child{node {$A$}}
                    }        
				}				
			}
			child {
					node (B) {$B$}
			};
            	\edef\sizetape{0.5cm}
            	\tikzstyle{tape}=[draw,minimum size=\sizetape]
            	\tikzstyle{head}=[arrow box,draw,minimum size=.35cm,arrow box
            	arrows={south:.2cm}] 
            	\tikzstyle{arrup}=[single arrow, draw, shape border rotate=90, fill=gray!30, single arrow head extend=.001cm, single arrow head indent=.01ex]
            	\tikzstyle{stack}=[rectangle split, rectangle split parts=#1,draw, anchor=center, minimum width = 1.5cm]
            	\tikzstyle{stack_cell}=[rectangle, draw, anchor=center, minimum width = 1.5cm ]


            	\begin{scope}[start chain=tape going right,node distance=-0.15mm]
            	    \node [on chain=tape,tape] at (-0.75,-6) (input) {$a$};
                        	    \node [on chain=tape,tape] {$a$};
                        	    \node [on chain=tape,tape] {$b$};
            		\node [on chain=tape,tape] {$\rendmarker$};		
		
            	\end{scope}
            	\newcommand{\head}[2]{\node [head, yshift=.27cm] at (tape-#1.north) (head#1) {#2};
            	\node [yshift=-.5cm] at (head#1.south)  (head#1_cpoint) {};	
            	}
                \head{3}{$a$}			
        \end{tikzpicture}
		\begin{tikzpicture}
           [level 1/.style={sibling distance=10mm},            
            level 2/.style={sibling distance=5mm, level distance=10mm}]
			\node   (S)   {$S$}  
			child  {
				node  (A) {$A$}
				child{
					node	{$a$}
                }
                child{
                    node {$A$}
                    child{node {$a$}}
                    child{node {$A$}
                        child{node {$\underline a$}}                        
                    }        
				}				
			}
			child {
					node (B) {$B$}
			};
            	\edef\sizetape{0.5cm}
            	\tikzstyle{tape}=[draw,minimum size=\sizetape]
            	\tikzstyle{head}=[arrow box,draw,minimum size=.35cm,arrow box
            	arrows={south:.2cm}] 
            	
            	\tikzstyle{arrup}=[single arrow, draw, shape border rotate=90, fill=gray!30, single arrow head extend=.001cm, single arrow head indent=.01ex]
            	\tikzstyle{stack}=[rectangle split, rectangle split parts=#1,draw, anchor=center, minimum width = 1.5cm]
            	\tikzstyle{stack_cell}=[rectangle, draw, anchor=center, minimum width = 1.5cm ]


            	\begin{scope}[start chain=tape going right,node distance=-0.15mm]
            	    \node [on chain=tape,tape] at (-0.75,-6) (input) {$a$};
                        	    \node [on chain=tape,tape] {$a$};
                        	    \node [on chain=tape,tape] {$b$};
            		\node [on chain=tape,tape] {$\rendmarker$};		
		
            	\end{scope}
            	\newcommand{\head}[2]{\node [head, yshift=.27cm] at (tape-#1.north) (head#1) {#2};
            	\node [yshift=-.5cm] at (head#1.south)  (head#1_cpoint) {};	
            	}
                \head{3}{$a$}			
        \end{tikzpicture}
		\begin{tikzpicture}
           [level 1/.style={sibling distance=12mm},            
            level 2/.style={sibling distance=5mm, level distance=10mm}]
			\node   (S)   {$S$}  
			child  {
				node  (A) {$A$}
				child{
					node	{$a$}
                }
                child{
                    node {$A$}
                    child{node {$a$}}                    
				}				
			}
			child {
					node (B) {$B$}
                    child { node {$\underline{a}$} }
                    child { node {$b$} }
                    child { node {$b$} }
			};
            	\edef\sizetape{0.5cm}
            	\tikzstyle{tape}=[draw,minimum size=\sizetape]
            	\tikzstyle{head}=[arrow box,draw,minimum size=.35cm,arrow box
            	arrows={south:.2cm}] 
            	
            	\tikzstyle{arrup}=[single arrow, draw, shape border rotate=90, fill=gray!30, single arrow head extend=.001cm, single arrow head indent=.01ex]
            	\tikzstyle{stack}=[rectangle split, rectangle split parts=#1,draw, anchor=center, minimum width = 1.5cm]
            	\tikzstyle{stack_cell}=[rectangle, draw, anchor=center, minimum width = 1.5cm ]


            	\begin{scope}[start chain=tape going right,node distance=-0.15mm]
            	    \node [on chain=tape,tape] at (-0.75,-6) (input) {$a$};
                        	    \node [on chain=tape,tape] {$a$};
                        	    \node [on chain=tape,tape] {$b$};
            		\node [on chain=tape,tape] {$\rendmarker$};		
		
            	\end{scope}
            	\newcommand{\head}[2]{\node [head, yshift=.27cm] at (tape-#1.north) (head#1) {#2};
            	\node [yshift=-.5cm] at (head#1.south)  (head#1_cpoint) {};	
            	}
                \head{3}{$a$}			
        \end{tikzpicture}
		\begin{tikzpicture}
           [level 1/.style={sibling distance=10mm},            
            level 2/.style={sibling distance=5mm, level distance=10mm}]
			\node   (S)   {$S$}  
			child  {
				node  (A) {$A$}
				child{
					node	{$a$}
                }
                child{
                    node {$A$}
                    child{node {$a$}}                    
				}				
			}
			child {
					node (B) {$B$}
                    child { node {$\underline  b$} }
			};
            	\edef\sizetape{0.5cm}
            	\tikzstyle{tape}=[draw,minimum size=\sizetape]
            	\tikzstyle{head}=[arrow box,draw,minimum size=.35cm,arrow box
            	arrows={south:.2cm}] 
            	
            	\tikzstyle{arrup}=[single arrow, draw, shape border rotate=90, fill=gray!30, single arrow head extend=.001cm, single arrow head indent=.01ex]
            	\tikzstyle{stack}=[rectangle split, rectangle split parts=#1,draw, anchor=center, minimum width = 1.5cm]
            	\tikzstyle{stack_cell}=[rectangle, draw, anchor=center, minimum width = 1.5cm ]


            	\begin{scope}[start chain=tape going right,node distance=-0.15mm]
            	    \node [on chain=tape,tape] at (-0.75,-6) (input) {$a$};
                        	    \node [on chain=tape,tape] {$a$};
                        	    \node [on chain=tape,tape] {$b$};
            		\node [on chain=tape,tape] {$\rendmarker$};		
		
            	\end{scope}
            	\newcommand{\head}[2]{\node [head, yshift=.27cm] at (tape-#1.north) (head#1) {#2};
            	\node [yshift=-.5cm] at (head#1.south)  (head#1_cpoint) {};	
            	}
                \head{3}{$b$}			
        \end{tikzpicture}
	\end{multicols}
		\caption{Parsing of $aab$ by PEG}\label{Fig:PEGparsingEx}
	\end{figure}
    
The rule $A \lto aA $ is applied while the content of the input tape matches the crown (the leafs) of the tree. So, when the last application is unsuccessful, it is replaced by the following rule $A \lto a$ which is unsuccessful too. So failure signal goes to the level above and the second rule $A \lto aA $ is replaced by $A \lto a $. After that, the control goes to the nonterminal $B$ for which firstly the rule $B \lto abb $ is applied, but since it leads to the failure, finally the rule $B \lto b$ is applied and it finishes the parsing since the whole word has been matched.

So PEGs are similar to CFGs since they share the idea of recursive descent parsing. But the difference is significant. Since all the rules for each nonterminal are ordered, the classical notion of concatenation does not apply to PEGs. We cannot say that if a word $u$ is derived from $A$ and $v$ is derived from $B$, then $uv$ is derived from $AB$ as explained below. In the PEG example above, a word $abb$ is never derived from $B$ because $A$ from $AB$ will always parse all $a$'s from the input. Note that the failure during the parsing occurs only because of a mismatch. So, the input $abbc$ will be parsed by the PEG as follows. 
 The prefix $ab$ will be successfully parsed by $AB$ and by $S$ as well, but since the whole word has not been parsed, the input is rejected. Since there was no failure, the rule $S \lto AB$ was not replaced by $S \lto BC$. So the word $abbc$ is not accepted by the PEG while it is derived from the CFG.
\begin{remark}\label{Remark:ConcatClosure}
    It is an open question, whether PELs are closed over concatenation.
\end{remark}

Note that the patterns of iteration $A \lto a A \por a$ and $C \lto c C \por \eps$ work in a greedy way. In the case of concatenation $Ce$ (with an expression $e$), all $c$'s from the prefix of the input would be parsed by $C$.

In the considered example we have not mentioned an important PEG's operation.
There is a unary operator $\pnot$ that is applied as follows. In the case $\pnot e$ the following happens. Firstly the parsing goes to the expression $e$.
 If $e$ parsed the following input successfully (i.e., a subtree for $e$ that matches the prefix of the unprocessed part of the input has been constructed without a failure), then $\pnot e$ produces failure. If a failure happens, then $\pnot e$ 
is considered to parse the empty word $\eps$ and the parsing process continues. For example, consider the following PEG:
\[
S \lto A (\pnot C) \por B \quad A \lto a A b \por \eps \quad B \lto a  B c \por \eps \quad C \lto a \por b.
\]
$(\pnot C)$ guarantees that if $A (\pnot C)$ finished without failure, then it parsed the whole input. 
So, in the case of the input $a^nb^n$ for $n \geq 0$, the input will be parsed by $A (\pnot C)$ and there will be no switch to the rule $S \lto B$. For any other input, the parsing of $A (\pnot C)$ fails and the rule is switched to $S \lto B$.
So, this PEG generates the language $\{a^nb^n \mid n \geq 0\} \cup \{a^nc^n \mid n \geq 0\}$.

Another common use of the operator $\pnot$ is its double application that has its name: $\& e = \pnot(\pnot e)$. This construction checks whether the prefix of the (unprocessed part of the) input matches $e$: in the positive case, $\&$ parses an empty word and computation continues, in the negative case, it returns failure. So $\&$ acts similarly to $\pnot$, but the conditions are flipped. Consider the following example:
\[
S \lto (\& (A c) ) B C \quad A \lto a A b \por \eps \quad B \lto a  B \por a \quad C \lto b C c \por \eps.
\]
This PEG checks that the input has the prefix $a^nb^nc$ and then parses the input if it has the form $a^{+}b^nc^n$, so the PEG generates the language $\{a^nb^nc^n \mid n \geq 1\}$.

So it is known that PEGs generate non CFLs and it is still an open question whether PEGs generate all CFLs. The conditional answer is no: there exists a linear-time parsing algorithm for PEG, while the work of L.~Lee~\cite{Lee02} and Abboud et al.~\cite{AbboudSIAM18} proves that it is very unlikely for CFLs due to theoretical-complexity assumptions: any CFG parser with time complexity $O(gn^{3-\eps})$, where $g$ is the size of the grammar and $n$ is the length of the input word, can be efficiently converted into an algorithm to multiply $m\times m$ Boolean matrices in time $O(m^{3-\eps/3})$. Note that this conditional result shows that it is unlikely that 2DPPDAs recognize all CFLs as well.

\section{Formal Definition of PEGs}

Our definition slightly differs from the standard definition of PEG from \cite{FordPEG2004} (Section 3) due to technical reasons. We discuss the difference after the formal definition.

\begin{definition}\label{Def:PEG}
    A parsing expression grammar $G$ is defined by a tuple $(N, \Sigma, P, S)$,
     where $N$ is a finite set of symbols called \emph{nonterminals},
     $\Sigma$ is a finite input alphabet (a set of \emph{terminals}), $N \cap \Sigma = \es$,
     $S \in N$ is \emph{the axiom}, and $P$ is a set of \emph{production rules} of the form $A \lto e$ 
     such that each nonterminal $A \in N$ has the only corresponding rule, 
     and $e$ is an expression that is defined recursively as follows. The empty word $\eps$, a terminal $a \in \Sigma$, and a nonterminal $A \in N$ are expressions.
     If $e$ and $e'$ are expressions, than so are $(e)$ which is equivalent to~$e$, a \emph{sequence} $e e'$, 
     a \emph{prioritized choice} $e \por e'$, a \emph{not predicate} $\pnot e$.
    We assume that $\pnot$ has the highest priority, the next priority has the sequence operation and the prioritized choice has the lowest one.
    We denote the set of all expressions over $G$ by $E_G$ or by $E$ if the grammar is fixed.
\end{definition}

To define the language generated by a PEG $G$ we define recursively a partial function
 $R: E \times \Sigma^* \to (\Sigma^* \cup \{\pfail\}) $
  that takes as input the expression $e$, the input word $w$, and if $R(e, w) = s \in \Sigma^*$, then $s$ is the suffix of $w = ps$ such that the prefix $p$ has been parsed by $e$ during the processing of $w$; if $R(e, w) = \pfail$ it indicates a failure that happens during the parsing process. So, the function $R$ is defined recursively as follows:
  \begin{itemize}
      \item  $R(\eps, w) = w$, $R(a, as) = s$, $R(a, bs) = \pfail$ (where $a \neq b$)
      \item  $R(e_1e_2, w) = R(e_2, R(e_1, w))$ if $R(e_1, w) \neq \pfail$, otherwise $R(e_1e_2, w) = \pfail$ 
      \item  $R(A, w) = R(e, w)$, where $A \lto e \in P$
      \item  $R(e_1 \por e_2, w) = R(e_1, w)$ if $R(e_1, w) \neq \pfail$, otherwise $R(e_1 \por e_2, w) = R(e_2, w)$
      \item  $R(\pnot e, w) = w$ if $R(e, w) = \pfail$, otherwise $R(\pnot e, w) = \pfail$
  \end{itemize}
  
Note that $R(e, w)$ is undefined if during the recursive computation, $R$ comes to an infinite loop. In fact, we will never meet this case because for each PEG there exists an equivalent form for which $R$ is a total function (see  Subsection~\ref{subsect::PEGForms}).

  We say that a PEG $G$ \emph{generates} the language $L(G) = \{ w \mid R(S, w) = \eps \}$; if $R(S, w) = \eps$ we say that $w$ is \emph{generated} by $G$. 

\subsection{Difference with other standard definitions and forms of PEGs}\label{subsect::PEGForms}

Note that our definition of $L(G)$ differs from \cite{FordPEG2004} (Section 3). The difference is about the operations allowed in PEG and the acceptance condition as well. In this subsection, we explain the difference and provide an overview of different forms of PEGs.

In the case of practical parsing, it is convenient to have more operations in the definition of PEG, but theoretically, it is more convenient to have fewer operations for the sake of the proofs' simplicity. In \cite{FordPEG2004} B.~Ford investigated different forms of PEGs and proved their equivalence, so as the equivalence with (generalized) top-down parsing languages. We begin our overview with operations that are so easy to express via operations from our definitions that they can be considered (as programmers say) syntactic sugar:
\begin{itemize}
    \item \textbf{Iterations:} $e^*$ is equivalent to $A \lto e A \por \eps$;\; $e^+ = ee^*$
    \item \textbf{Option expression:} $e?$ is equivalent to $A \lto e \por \eps$
    \item \textbf{And predicate:} $\& e = \pnot(\pnot e)$
    \item \textbf{Any character:} $\bullet  = a_1 \por a_2 \por \ldots \por a_k$ where $\Sigma = \{a_1,\ldots, a_k\}$ %
    \item \textbf{Failure:} $\pfail = \pnot \eps$ (we use the same notation as for the failure result)
\end{itemize}
We can use these constructions below. In this case, the reader can assume that they are reduced to the operations from Definition~\ref{Def:PEG} as we have described.

So by adding to the definition (or removing) syntactic sugar operations, one obviously obtains an equivalent definition (in terms of recognizable languages' class). Now we move to the nontrivial cases proved in~\cite{FordPEG2004}.

A PEG $G$ is \emph{complete} if for each $w \in \Sigma^*$ the function $R(S, w)$ is defined. A PEG $G$ is \emph{well-formed} if it does not contain directly or mutually left-recursive rules, such as $A \lto A a \por a$. It is easy to see that a well-formed grammar is complete. It was proved in~\cite{FordPEG2004} that each PEG has an equivalent well-formed one and the algorithm of the transformation had been provided as well. So from now on we assume that each PEG in our constructions is well-formed. Note that most PEGs that are used in practice are well formed by construction.

Another interesting result from~\cite{FordPEG2004} is that each PEG has an equivalent one without predicate $\pnot$. Despite this fact, we decided to include $\pnot$ in our definition since unlike substitutions for syntactical sugar operations, removing $\pnot$ predicate requires significant transformations of the PEG. Since $\pnot$ predicate is widely used in practice and it does not affect our constructions, by including $\pnot$ in the definition we achieve the constructions that can be used in practice.

As we have already mentioned our condition of the input acceptance also differs from~\cite{FordPEG2004}. We used the provided approach since if $R(S, w) = \eps$ we can reconstruct the parsing tree with the root $S$ that generates $w$. We use this property for the transformation of PEG to the computational model and the inverse transformation as well. Firstly, in~\cite{FordPEG2004} there is no axiom in PEG, but there is a starting expression $e_S$. This difference is insignificant since one can state $e_S = S$ and $S \lto e_S$ for the opposite direction. A PEG from~\cite{FordPEG2004} generates the input $w$ if $R(e_S, w) \neq \pfail$, so $R(e_S, w) = y$, where $w = xy$. So to translate PEG from~\cite{FordPEG2004} to ours one needs to set $S \lto e_S (\bullet)^*$. The transformation in the other direction is $e_S = S (\pnot \bullet)$.

\section{Definition of the Computational Model}

We call our model \emph{deterministic pointer pushdown automata} (DPPDA). We consider a one-way model (1DPPDA or just DPPDA) as a restricted case of a two-way model (2DPPDA), so we define the two-way model only.

\begin{definition}\label{def::BA}
A $2$-way deterministic pointer pushdown automata $M$
        is defined by a tuple \[\langle Q, \Sigma_{\lendmarker\rendmarker}, \Gamma, F, q_0, Z_0, \delta  \rangle, \text{ where }\]
	\begin{itemize}
		\item $Q$ is the finite set of automaton states.
		\item $\Sigma_{\lendmarker\rendmarker} = \Sigma\cup\{\lendmarker,\rendmarker\}$, where $\Sigma$ is the finite input alphabet and ${\lendmarker,\rendmarker}$ are the end markers. The input has the form $\lendmarker w\rendmarker,\, w\in \Sigma^*$.		
        \item $\Gamma$ is the alphabet of the pushdown storage.
		\item $F \subseteq Q$ is the set of the final states.
		\item $q_0 \in Q$ is the initial state.
        \item $Z_0 \in \Gamma$ is the initial symbol in the pushdown storage.
		\item $\delta$ is the partial transition function defined as
	$\delta : Q \times \Sigma_{\lendmarker\rendmarker} \times \Gamma \to
	Q\times\Gamma^*_\eps\times\{\lto, \dto, \uto, \rto \}$,
     where $\Gamma_\eps = \Gamma \cup \{\eps\}$. 
     Moreover, if $\delta(q, a, z) = (q', \alpha, \uto)$, then $\alpha = \eps$.
	\end{itemize}
\end{definition}

To define and operate with configurations of automata we introduce some notation. We denoted by $\alpha \times \vec{i} = (Z_m, i_m), \ldots, (Z_0, i_0)$ the zip of the sequences $\alpha$ and $\vec i$, which are of the same length by the definition. A right associative operation $x:\vec{l}$ prepends an element $x$ to the beginning of the vector $\vec{l}$ (we adopt this operator from Haskell programming language). E.g., if $\vec{l} = 1,2,3$ and $\vec{r} = 2,3$, we write $\vec{l} = 1:\vec{r}$.

A configuration of $M$ on a word $w$ is a quadruple $c\in Q\times (\Gamma \times I)^* \times I$, where $I = \{0,\ldots, |w|+1\}$; we refer to $w_i, i \in I$ as the $i$-th input symbol; $w_0 = \lendmarker$, $w_{|w|+1} = \rendmarker$. A configuration $c = (q, \alpha \times \vec{i}, j)$ has the following meaning. The head of 2DPPDA $M$ is over the symbol $w_j$ in the state $q$; the pushdown contains $\alpha = Z_mZ_{m-1}\cdots Z_0$ (the stack grows from right to left) and there is also additional information vector $\vec{i} = i_m, i_{m-1}, \ldots, i_0$, $i_k \in I$ such that $Z_k$ was pushed to the pushdown store when the head was over the $i_k$-th cell.     

The automaton's move is defined via the relation $\der$ as follows. Let $\delta(q, a, Z_n) = (q', \beta, d)$. The relation
\[ (q, Z_n\alpha \times i_n:\vec{i}, j) \der (q', \alpha' \times \vec{i}', j')\]
is defined according to the following case analysis.
\begin{itemize}    
     \item If $d \in \{\lto, \dto, \rto\}$, then $j' = j - 1$, $j' = j$, $j' = j + 1$ respectively. The cases $a = \lendmarker$, $d = {\lto}$ and $a = \rendmarker$, $d = {\rto}$ are forbidden.
    \item If $\beta = \eps$ and $d \in \{\lto, \dto, \rto\}$, then $\alpha' = \alpha$, $\vec{i}' = \vec{i}$
     \item If $\beta = \eps$ and $d = {\uto}$, then $\alpha' = \alpha$, $\vec{i}' = \vec{i}$, $j' = i_n$ 
     \item If $\beta = X_1\cdots X_k$, $k > 0$, then $\alpha' = \beta Z_n\alpha$,  $\vec{i}' = \underbrace{j':j':\cdots :j'}_k:i_n:\vec{i}$
\end{itemize}

The initial configuration is $(q_0, Z_0 \times 0, 0)$ and an accepting configuration is $(q_f, \eps \times (), |w| + 1)$, where $q_f \in F$ and by $()$ we have denoted the empty sequence of integers. I.e., $M$ reaches the right end marker~$\rendmarker$ empties the stack and finishes the computation in an accepting state. Formally, a word $w$ is accepted by $M$ if there exists a computational path from the initial configuration to an accepting one.

In the case of 1DPPDA (or just DPPDA), the moves $\lto$ are forbidden.

\subsection{Properties of DPPDA}\label{sect:DPPDA:properties}
Now we discuss the properties of the model and provide some shortcuts for the following needs. Note that each move of a 2DPPDA is either push- or pop-move due to the sake of convenience in the proofs (induction invariants are simpler). At the same time, in constructions, it is convenient to have right, left, and even stay moves that do not change the stack. So we add moves $\hrto$, $\hlto$, and $\hdto$  that are syntactic sugar for such moves.
So, when we write $\delta (q, a, z) = (p, \hrto)$, we mean the sequence of moves:
\[\delta (q, a, z) = (p', Z', \rto);\; \forall \sigma \in \Sigma_{\lendmarker\rendmarker} : \delta(p', \sigma, Z') = (p, \eps, \dto).\]
The construction for $\hlto$ and $\hdto$ are similar.

Due to the definition of $\delta$, a DPPDA can move only if the stack is non-empty and since each move is either push or pop, we have that $Z_0$ lies at the bottom of the stack till the last move of a computation or even after the last move in the case of unsuccessful computation. In the case of a successful computation, $Z_0$ is popped at the last move.

\begin{example} The following DPPDA $M$ recognizes the language $\{a^nb^nc^n \mid n \geq 1\}$. 
    \[M = \langle \{q_0, q_1, q_f\}, \{a,b,c, \lendmarker, \rendmarker\}, \{Z_0, Y, X\}, \{q_f\}, q_0, Z_0, \delta  \rangle, \]
where the transition function $\delta$ is defined as follows:
\begin{align*}
    \delta(q_0, \lendmarker, Z_0) &= (q_0, Y, \rto),
         & \delta(q_0, a, Y) &= (q_0, X, \rto),
         & \delta(q_0, a, X) &= (q_0, X, \rto),\\
    \delta(q_0, b, X) &= (q_0, \eps, \rto),
         & \delta(q_0, c, Y) &= (q_1, \eps, \uto)
         & \delta(q_1, a, Z_0) &= (q_1, \hrto)\\ 
    \delta(q_1, b, Z_0) &= (q_1, X, \rto),
        & \delta(q_1, b, X) &= (q_1, X, \rto),
        & \delta(q_1, c, X) &= (q_1, \eps, \rto),\\
    \delta(q_1, \rendmarker, Z_0) &= (q_f, \eps, \dto).        
\end{align*}

The demonstration of $M$ is provided on Fig.~\ref{Fig:AnBnCnEx}; 
on each subfigure we provide a configuration and the corresponding picture, note that $\delta$-rules are applied in the same order as we wrote them above.  We represent the stack via a linked list and attach the lists' nodes to the cell where the corresponding symbol had been pushed. The run of $M$ looks as follows (hereinafter $\ders$ is the reflexive and transitive closure of the relation $\der$, here we use it when omit some intermediate steps):
\begin{align*}
    & \big(q_0, Z_0\times0, 0\big)  \der \big(q_0, (YZ_0\times 1:0), 1\big) \der \big(q_0, (XYZ_0\times 2:1:0), 2\big) \ders \\
    \ders & \big(q_0, (XXXYZ_0\times 4:3:2:1:0), 4\big) \der
    \big(q_0, (XXYZ_0\times 3:2:1:0), 5\big) \ders \\
    \ders & \big(q_0, (YZ_0\times 1:0), 7\big) \der \big(q_1, (Z_0\times 0), 1\big) 
        \ders \big(q_1, (Z_0\times 0), 4\big) \der \big(q_1, (XZ_0\times 5:0), 5\big)\ders \\
    \ders & \big(q_1, (XXXZ_0\times 7:6:5:0), 7\big) \der \big(q_1, (XXZ_0\times 6:5:0), 8\big) \ders \\
    \ders & \big(q_0, (Z_0\times:0), 10\big) \der \big(q_f, (), 10)
\end{align*}

\newenvironment{AnBnCnTape}[0]{%
	\begin{tikzpicture}[ >=stealth]
        \useasboundingbox (-0.4,-1.25) rectangle (6.1,1.2);
	\edef\sizetape{0.6cm}
	\tikzstyle{tape}=[draw, minimum size=\sizetape]
	\tikzstyle{head}=[arrow box, draw, minimum size=.75cm, arrow box
	arrows={south:.2cm}, fill=lipicsYellow]

	\tikzstyle{arrup}=[single arrow, draw, shape border rotate=90, fill=gray!30, single arrow head extend=.001cm, single arrow head indent=.01ex]
	\tikzstyle{stack}=[rectangle split, rectangle split parts=##1,draw, anchor=center, minimum width = 1.5cm]
	\tikzstyle{stack_cell}=[rectangle, draw, anchor=center, minimum width = 1.5cm ]

	\begin{scope}[start chain=tape going right, node distance=-0.15mm]
	    \node [on chain=tape, tape] at (0, 0) (input) {\,$\lendmarker$\,};
                    \node [on chain=tape, tape] {$a$};
	                \node [on chain=tape, tape] {$a$};
            	    \node [on chain=tape, tape] {$a$};
            	    \node [on chain=tape, tape] {$b$};
            	    \node [on chain=tape, tape] {$b$};
                    \node [on chain=tape, tape] {$b$};
            	    \node [on chain=tape, tape] {$c$};
            	    \node [on chain=tape, tape] {$c$};
                    \node [on chain=tape, tape] {$c$};                                
		\node [on chain=tape, tape] {$\rendmarker$};		

	\end{scope}
	\newcommand{\head}[2]{\node [head, yshift=.375cm] at (tape-##1.north) (head##1) {##2};
	\node [yshift=-.5cm] at (head##1.south)  (head##1_cpoint) {};	
	}            		

	\newcommand{\ahline}[2]{
		\draw[->] (head##1.south) to [out=-60,in=-120] (Hanchor##2.south);
	}	
				
	\newcommand{\aaline}[2]{
		\draw[-, very thick] (Hanchor##1.south) to [out=-60,in=-120] (Hanchor##2.south);
    }

        \tikzstyle{Hanchor}=[rectangle, draw, minimum width=\sizetape, minimum height=15pt, fill=lipicsLightGray] 		
        
		\newcommand{\HanchorTape}[2]{\node [Hanchor, yshift=-.25cm] at (tape-##1.south) (Hanchor##1) {##2};
		\node [yshift=-.5cm] at (Hanchor##1.south)  (Hanchor##1_cpoint) {};	
		}
		\newcommand{\HanchorAnchor}[3]{\node [Hanchor, yshift=-.25cm] at (##1.south) (Hanchor##2) {##3};
		\node [yshift=-.5cm] at (Hanchor##2.south)  (Hanchor##2_cpoint) {};	
		}
        
   		\tikzstyle{arrup}=[single arrow, draw, shape border rotate=90, fill=gray!30, single arrow head extend=.001cm, single arrow head indent=.01ex]
        }%
        {%
    \end{tikzpicture}%
}
    
\begin{figure}
    \begin{subfigure}[t]{0.5\textwidth}
    \centering
        \begin{AnBnCnTape}                      
               \head{1}{$q_0$}
               \HanchorTape{1}{$Z_0$}
        \end{AnBnCnTape} 
        \vspace*{-15pt}
        \[\big(q_0, Z_0\times0, 0\big) \]
        \caption{Initial configuration.}
    \end{subfigure}
    \begin{subfigure}[t]{0.5\textwidth}
    \centering
        \begin{AnBnCnTape}                      
               \head{2}{$q_0$}
               \HanchorTape{1}{$Z_0$}
               \HanchorTape{2}{$Y$}
               \aaline{1}{2}
        \end{AnBnCnTape} 
        \vspace*{-15pt}
        \[\big(q_0, (YZ_0\times 1:0), 1\big)\]
        \caption{Push $Y$ to return back after checking the balance.}
    \end{subfigure}
    \begin{subfigure}[t]{0.5\textwidth}
    \centering
        \begin{AnBnCnTape}                      
               \head{3}{$q_0$}
               \HanchorTape{1}{$Z_0$}\aaline{1}{2}
               \HanchorTape{2}{$Y$}
                \HanchorTape{3}{$X$}\aaline{3}{2}
        \end{AnBnCnTape} 
        \vspace*{-10pt}
        \[\big(q_0, (XYZ_0\times 2:1:0), 2\big)\]
        \caption{Push $X$ while processing of $a$'s.}
    \end{subfigure}
    \begin{subfigure}[t]{0.5\textwidth}
    \centering
        \begin{AnBnCnTape}                      
               \head{5}{$q_0$}
               \HanchorTape{1}{$Z_0$}
               \HanchorTape{2}{$Y$}\aaline{1}{2}

               \HanchorTape{3}{$X$}\aaline{2}{3}               
               
               {\HanchorTape{4}{$X$}\aaline{4}{3}}               
               
               {\HanchorTape{5}{$X$}\aaline{5}{4}}
        \end{AnBnCnTape} 
        \vspace*{-10pt}
        \[\big(q_0, (XXXYZ_0\times 4:3:2:1:0), 4\big)\]
        \caption{Pop $X$ while processing of $b$'s.}
    \end{subfigure}
    \begin{subfigure}[t]{0.5\textwidth}
    \centering
        \begin{AnBnCnTape}                      
               \head{6}{$q_0$}
               \HanchorTape{1}{$Z_0$}
               \HanchorTape{2}{$Y$}\aaline{1}{2}

               \HanchorTape{3}{$X$}\aaline{2}{3}               
               
               {\HanchorTape{4}{$X$}\aaline{4}{3}}               
               
        \end{AnBnCnTape} 
        \vspace*{-10pt}
        \[\big(q_0, (XXYZ_0\times 3:2:1:0), 5\big)\]
        \caption{Pop $X$ while processing of $b$'s.}
    \end{subfigure}
    \begin{subfigure}[t]{0.5\textwidth}
    \centering
        \begin{AnBnCnTape}                      
               \head{8}{$q_0$}
               \HanchorTape{1}{$Z_0$}
               \HanchorTape{2}{$Y$}\aaline{1}{2}                                           
        \end{AnBnCnTape} 
        \vspace*{-10pt}
        \[\big(q_0, (YZ_0\times 1:0), 7\big)\]
        \caption{Check that $Y$ is on top when meet $c$ and pop $\uto$.}
    \end{subfigure}
    \begin{subfigure}[t]{0.5\textwidth}
    \centering
        \begin{AnBnCnTape}                      
               \head{2}{$q_1$}
               \HanchorTape{1}{$Z_0$}               
        \end{AnBnCnTape} 
        \vspace*{-20pt}
        \[\big(q_1, Z_0\times 0, 1\big)\]
        \caption{Ignore $a$'s via $\hrto$ moves.}
    \end{subfigure}
    \begin{subfigure}[t]{0.5\textwidth}
    \centering
        \begin{AnBnCnTape}                      
               \head{5}{$q_1$}
               \HanchorTape{1}{$Z_0$}               
        \end{AnBnCnTape} 
        \vspace*{-20pt}
        \[\big(q_1, Z_0\times 0, 4\big)\]
        \caption{Push $X$ while processing of $b$'s.}
    \end{subfigure}  
    \begin{subfigure}[t]{0.5\textwidth}
    \centering
        \begin{AnBnCnTape}                      
               \head{6}{$q_1$}
               \HanchorTape{1}{$Z_0$}               
               {\HanchorTape{6}{$X$}\draw[-, very thick] (Hanchor1.south) to [out=-20,in=-160] (Hanchor6.south);}
        \end{AnBnCnTape} 
        \vspace*{-10pt}
        \[\big(q_1, (XZ_0\times 5:0), 5\big)\]
        \caption{Push $X$ while processing of $b$'s.}
    \end{subfigure}
    \begin{subfigure}[t]{0.5\textwidth}
    \centering
        \begin{AnBnCnTape}                      
               \head{8}{$q_1$}
               \HanchorTape{1}{$Z_0$}               
               {\HanchorTape{6}{$X$}\draw[-, very thick] (Hanchor1.south) to [out=-20,in=-160] (Hanchor6.south);}
               \HanchorTape{7}{$X$}\aaline{6}{7}
               \HanchorTape{8}{$X$}\aaline{8}{7}
        \end{AnBnCnTape} 
        \vspace*{-10pt}
        \[\big(q_1, (XXXZ_0\times 7:6:5:0), 7\big)\]
        \caption{Pop $X$ while processing of $c$'s.}
    \end{subfigure}
    \begin{subfigure}[t]{0.5\textwidth}
    \centering
        \begin{AnBnCnTape}                      
               \head{9}{$q_1$}
               \HanchorTape{1}{$Z_0$}               
               {\HanchorTape{6}{$X$}\draw[-, very thick] (Hanchor1.south) to [out=-20,in=-160] (Hanchor6.south);}
               \HanchorTape{7}{$X$}\aaline{6}{7}
        \end{AnBnCnTape}
        \vspace*{-10pt} 
        \[\big(q_1, (XXZ_0\times 6:5:0), 8\big)\]
        \caption{Pop $X$ while processing of $c$'s.}
    \end{subfigure}
    \begin{subfigure}[t]{0.5\textwidth}
    \centering
        \begin{AnBnCnTape}                      
               \head{11}{$q_1$}
               \HanchorTape{1}{$Z_0$}               
        \end{AnBnCnTape} 
        \vspace*{-10pt}
        \[\big(q_0, Z_0\times0, 10\big)\]
        \caption{Check that $Z_0$ is on top when meet $\rendmarker$ and pop $\dto$.}
    \end{subfigure}
    \caption{Demonstration of DPPDA recognizing $a^nb^nc^n$.}\label{Fig:AnBnCnEx}
\end{figure}    
    
\end{example}

\section{Equivalence of DPPDAs and PEGs}

In this section, we provide an algorithm that transforms a PEG into a DPPDA and vise versa.
Our construction is similar to the well-known proof of equivalence between CFGs and DPDA for CFLs, but since both DPPDAs and PEGs are more complicated than DPDAs and CFLs, our constructions are technically harder. We refer the reader to~\cite{Sipser13} for the detailed explanation of the proof idea, where it was provided for CFLs (so as the proof for CFLs as well).

\subsection{PEG to DPPDA}

In this section we assume that PEGs have a special form. We call it Chomsky's normal form since it is similar to such a form for CFGs.


\begin{definition} 
    A PEG $G$ has a \emph{Chomsky normal form} if the axiom $S$ never occurs on the right side of the rules and the rules are of the following form:
    \[ A \lto B \por C,\quad  A \lto BC,\quad A \lto \pnot B,\quad  A \lto a,\quad A \lto \eps.\]
\end{definition}

\begin{lemma}\label{Lemma:CNF}
    Each PEG $G$ has an equivalent PEG $G'$ in Chomsky's normal form which is complete if so was $G$. 
\end{lemma}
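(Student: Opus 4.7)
The plan is to follow the classical two-stage Chomsky normal form construction from CFGs, adapted to PEG semantics. In stage one I introduce a fresh axiom $S'$ together with a fresh nonterminal $N_\eps$ carrying the rule $N_\eps \lto \eps$, and add the single rule $S' \lto S N_\eps$. Because $R(N_\eps, u) = u$ for every $u$, unfolding the definition of $R$ gives $R(S N_\eps, w) = R(S, w)$ for every $w$, so the new axiom generates the same language, and by construction $S'$ does not appear on any right-hand side.

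Stage two reduces every rule body to one of the allowed CNF shapes. I assign to each subexpression $e$ that occurs in the grammar a nonterminal $\langle e \rangle$ and at most one defining rule, recursively: $\langle B \rangle = B$ when $B$ is already a nonterminal; $\langle \eps \rangle \lto \eps$; $\langle a \rangle \lto a$; $\langle e_1 e_2 \rangle \lto \langle e_1 \rangle \langle e_2 \rangle$; $\langle e_1 \por e_2 \rangle \lto \langle e_1 \rangle \por \langle e_2 \rangle$; $\langle \pnot e_1 \rangle \lto \pnot \langle e_1 \rangle$; and $\langle \& e_1 \rangle \lto \& \langle e_1 \rangle$ (since the CNF admits $\&$ as a primitive). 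Each original rule $A \lto e$ is then rewritten into CNF according to the top structure of $e$: kept as-is when $e \in \{\eps, a\}$; replaced by $A \lto B \por B$ when $e$ is a lone nonterminal $B$ (a unit rule, not in CNF, but equivalent since $R(B \por B, w) = R(B, w)$ irrespective of whether $R(B, w) = \pfail$); and otherwise folded into the appropriate CNF template, for instance $A \lto \langle e_1 \rangle \langle e_2 \rangle$ when $e = e_1 e_2$. All resulting rules have the declared CNF shapes.

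Correctness follows by structural induction on $e$, proving that $R_{G'}(\langle e \rangle, w) = R_G(e, w)$ for every input $w$ (with the convention that the two sides agree on definedness). Base cases are immediate, and the recursive cases unfold one step of $R$ using the freshly added defining rule and invoke the inductive hypothesis on the strict subexpressions; combined with the unit-rule identity above, this yields $R_{G'}(A, w) = R_G(A, w)$ for every original nonterminal $A$, and stage one lifts this to $L(G') = L(G)$. Completeness is preserved because any infinite recursion in $G'$ that visits a fresh nonterminal $\langle e \rangle$ immediately unrolls into the recursion of $e$ in $G$; collapsing the $\langle\cdot\rangle$-names therefore produces an infinite recursion in $G$, so if $G$ is complete then so is $G'$. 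The main obstacle is choosing the inductive invariant strongly enough to handle the $\pnot$ and $\&$ cases jointly with the bookkeeping for the unit-rule replacement; once that is in place the case analysis is routine.
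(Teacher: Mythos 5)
Your construction is essentially the paper's own: the paper's proof idea likewise introduces bracketed shortcut nonterminals for subexpressions (e.g.\ $[ABC] \lto A[BC]$, $[BC] \lto BC$, and $[\pnot(\mathrm{expr})] \lto \pnot[(\mathrm{expr})]$) and declares the rest routine. Your version is in fact more careful than the paper's sketch, since you also handle the axiom-not-on-the-right condition and unit rules explicitly; the only caveat is that the final correctness argument should be phrased as induction on the recursion depth of $R$ rather than pure structural induction on expressions (a nonterminal's body is not a structural subexpression), but this is a standard adjustment and does not affect the construction.
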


The proof of the lemma is straight-forward and uses almost the same algorithm as for the transformation of a CF-grammar to a grammar in the Chomsky normal form (see e.g.,~\cite{Hopcroft1979}), so we describe only a proof idea.

\begin{proof}[Proof idea of Lemma~\ref{Lemma:CNF}]
     To get rid of long concatenations we replace long expressions with their shortcuts, e.g., concatenation $ABC$ is replaced by a single nonterminal $[ABC]$ and then rules  $[ABC] \lto A[BC]$  and $[BC] \lto BC$ are added. A similar transformation works with longer concatenation and ordered choice. For negation we use the similar trick: We replace $\pnot(\mathrm{expr}) $ by $ [\pnot(\mathrm{expr})]$ and add the rule  $[\pnot(\mathrm{expr})] \lto  \pnot[(\mathrm{expr})]$.
\end{proof}

\begin{theorem} 
    For a PEG $G$ there exists an equivalent DPPDA $M$.    
\end{theorem}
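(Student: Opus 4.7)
By Lemma~\ref{Lemma:CNF} we may assume $G = (N, \Sigma, P, S)$ is in Chomsky normal form. The plan is to simulate $G$'s recursive descent: every call to a nonterminal pushes one stack symbol whose embedded pointer records the head position at which that call began, and the DPPDA's ability to jump back to that position on pop implements exactly the backtracking that PEG failure requires.

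The automaton $M$ uses states $q_A^{call}$ for each $A\in N$, two return states $q^{ok}$ and $q^{fail}$, plus $q_0,q_f$ and auxiliary states for the shortcut moves of Section~\ref{sect:DPPDA:properties}. The stack alphabet is $\{z_0\}$ together with a return-frame symbol $m_{A,\tau}$ for each subcall site $\tau$ inside $A$'s rule (at most two per rule, by CNF). Transitions are defined by case analysis on the form of $A\lto e$:
\begin{itemize}
\item $A\lto a$: from $q_A^{call}$, shift right via $\hrto$ into $q^{ok}$ if the head reads $a$, else enter $q^{fail}$ without moving.
\item $A\lto \eps$: go from $q_A^{call}$ to $q^{ok}$ without moving.
\item $A\lto BC$: push $m_{A,1}$ and enter $q_B^{call}$. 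When $q^{ok}$ is reached with $m_{A,1}$ on top, push $m_{A,2}$ and enter $q_C^{call}$. When $q^{ok}$ is reached with $m_{A,2}$ on top, pop both frames without using their pointers and stay in $q^{ok}$. When $q^{fail}$ is reached with $m_{A,2}$ on top, pop $m_{A,2}$ plainly, then pop $m_{A,1}$ \emph{with} jump-back so the head returns to $A$'s start, and enter $q^{fail}$. When $q^{fail}$ is reached with $m_{A,1}$ on top, pop $m_{A,1}$ and propagate $q^{fail}$.
\item $A\lto B\por C$: push $m_{A,B}$ and enter $q_B^{call}$. On $q^{ok}$ above $m_{A,B}$, pop and stay in $q^{ok}$; on $q^{fail}$ above $m_{A,B}$, pop $m_{A,B}$ (the head is already back at $A$'s start), push $m_{A,C}$ and enter $q_C^{call}$; the second branch is then handled as in the single-call patterns.
\item $A\lto\pnot B$: push $m_{A,\pnot}$ and enter $q_B^{call}$. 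On $q^{ok}$ above $m_{A,\pnot}$, pop \emph{with} jump-back and enter $q^{fail}$; on $q^{fail}$ above $m_{A,\pnot}$, pop plainly and enter $q^{ok}$.
\item $A\lto \& B$: as for $\pnot B$ but pass the success/failure signal through unchanged, always restoring the head.
\end{itemize}
Initially $M$ steps past $\lendmarker$ into $q_S^{call}$, leaving $z_0$ at the bottom; it accepts iff the computation eventually reaches $q^{ok}$ with only $z_0$ on the stack and the head at $\rendmarker$, at which point one last move pops $z_0$ and enters $q_f$.

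Correctness is then established from a single invariant, proved by induction on the depth of the recursion used to compute $R$ (which terminates since $G$ is complete): if $M$ enters $q_A^{call}$ with head at position $i$ and stack $\alpha$, it eventually reaches $(q^{ok}, \alpha, i+|p|)$ when $R(A, w_i\cdots w_{|w|}) = w_{i+|p|}\cdots w_{|w|}$ and $(q^{fail}, \alpha, i)$ when this value is $\pfail$, without ever popping below $\alpha$ in between. Each clause in the recursive definition of $R$ matches one of the rule forms above step-for-step, with the sub-computation between push and pop of the corresponding frame playing the role of the recursive call. The main technical obstacle is the case $A\lto BC$: when $B$ succeeds, the head must remain at the end of $B$'s match while a new frame for the call to $C$ is pushed, yet a later failure of $C$ must still jump back all the way to $A$'s start. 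That is why $m_{A,1}$ is kept beneath $m_{A,2}$ throughout the call to $C$ and is popped with jump-back only after $C$'s own failure has already rewound the head to $B$'s endpoint. Instantiating the invariant with $A = S$ and $\alpha = z_0$ yields $L(M) = L(G)$.
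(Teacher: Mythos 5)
Your construction is correct and is essentially the paper's own: one pointer-carrying stack frame per call site, jump-back pops to implement backtracking, and correctness by induction on the recursion depth of $R$. The only differences are cosmetic (you use generic return states $q^{ok}/q^{fail}$ dispatching on call-site frame symbols where the paper uses per-nonterminal states $q_{A_\pm}$ with the nonterminal itself on the stack, and you maintain the slightly stronger invariant that a failing call always restores the head itself rather than leaving the jump-back to the parent's frame), and both variants yield the same automaton behavior.
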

\begin{proof}
    We assume that $G$ is a well-formed PEG in a Chomsky normal form (by Lemma~\ref{Lemma:CNF}). We construct an equivalent DPPDA $M = \langle Q, \Sigma_{\lendmarker\rendmarker}, \Gamma, \{q_f\}, q_0, Z_0, \delta  \rangle$ by the PEGs description. We formally describe $\delta$ on Fig.~\ref{Fig:PEGtoDPPDA}; we do not provide a full list of states $Q$ and pushdown alphabet $\Gamma$ since most of the states and symbols depend on rules listed in $\delta$'s construction and can be easily restored from it. Since the construction is straightforward, we describe here only the main details. 

\begin{figure}
\begin{center}
    \setlength{\fboxsep}{2ex}
    \fbox{
    \begin{minipage}{0.9\textwidth}
    \centerline{\textbf{{Construction of $\delta$}}}
    \medskip
    We denote by $Z \in \Gamma$ and $\sigma \in \Sigma_{\lendmarker\rendmarker}$ arbitrary symbols. The rules are grouped with respect to the PEG's operations. Note that the states $q$, $q_{A_{\pm}}$ are the same for all rules, while other states and stack symbols depend on the rule, i.e., stack symbols $A_1$'s from different rules are different even if they correspond to the same nonterminal $A$. When we use nonterminals (and states) with signs $\pm$ or $\mp$, the signs have corresponding matching, i.e., if in a rule we have $A_{\pm}$ and $B_{\mp}$, then when $A_{\pm}=A_{+}$, $B_{\mp}$ equals to $B_{-}$ and when $A_{\pm}=A_{-}$, $B_{\mp}$ equals to $B_{+}$.        
    \begin{enumerate}
        \begin{multicols}{2} 
        \setcounter{enumi}{-1}
        \item General rules
        \begin{itemize}[topsep=0pt]
            \item $\delta(q_0, \lendmarker, Z_0) = (q, S, \rto)$;
            \item $\delta(q_{A_{\pm}}, \sigma, A) = (q_{A_{\pm}}, \eps, \dto)$;
            \item $\delta(q_{S_{+}}, \rendmarker, Z_0) = (q_f, \eps, \dto)$.
        \end{itemize}
        \item $A \lto BC$
        \begin{itemize}[topsep=0pt, itemsep=1ex]
            \item $\delta(q, \sigma, A) = (q, BA_1, \dto)$;
            \item $\delta(q_{B_{+}}, \sigma, A_{1}) = (q, CA_2, \dto)$; 
            \item $\delta(q_{B_{-}}, \sigma, A_{1}) = (q_{A_{-}}, \eps, \uto)$;
            \item $\delta(q_{C_{+}}, \sigma, A_{2}) = (q_{A_2}, \eps, \dto)$;
            \item $\delta(q_{A_2}, \sigma, A_1) = (q_{A_{+}}, \eps, \dto)$;
            \item $\delta(q_{C_{-}}, \sigma, A_{2}) = (q_{A_{2-}}, \eps, \uto)$;
            \item $\delta(q_{A_{2-}}, \sigma, A_1) =  (q_{A_{-}}, \eps, \uto) $.
        \end{itemize} 
        \columnbreak
        \item $A \lto B \por C$
        \begin{itemize}[topsep=0pt, itemsep=1ex]
            \item $\delta(q, \sigma, A) = (q, BA_1, \dto)$;
            \item 
                  $\delta(q_{B_{+}}, \sigma, A_{1}) = (q_{A_{+}}, \eps, \dto)$;
            \item 
                  $\delta(q_{B_{-}}, \sigma, A_{1}) = (q_{A_2}, \eps, \uto)$;
            \item $\delta(q_{A_2}, \sigma, Z) = (q, CA_2, \dto)$;
            \item $\delta(q_{C_{+}}, \sigma, A_{2}) = (q_{A_{+}}, \eps, \dto)$;
            \item $\delta(q_{C_{-}}, \sigma, A_{2}) = (q_{A_{-}}, \eps, \uto)$.
        \end{itemize}
        \item $A \lto \eps$ 
        \begin{itemize}[topsep=0pt]
            \item $\delta(q, \sigma, A) = (q_{A_{+}}, \hdto)$.
        \end{itemize}
        
        \item $A \lto \pnot B$ 
        \begin{itemize}[topsep=0pt, itemsep=1ex]
            \item $\delta(q, \sigma, A) = (q, BA_1, \dto)$;
            \item $\delta(q_{B_{\pm}}, \sigma, A_1) = (q_{A_{\mp}}, \eps, \uto)$.
        \end{itemize}
        \item $A \lto a$
        \begin{itemize}[topsep=0pt, itemsep=1ex]
            \item $\delta(q, a, A) = (q_{A_{+}}, \hrto)$;                      
            \item $\delta(q, b, A) = (q_{A_{-}}, \hdto)$, here $b \neq a$.
        \end{itemize}        
        \end{multicols}       
    \end{enumerate}
    \end{minipage} }
\end{center}    
    \caption{Construction of $\delta$ by the PEG $G$}\label{Fig:PEGtoDPPDA}
\end{figure}

The DPPDA $M$ simulates the parsing process of a PEG $G$ on the input $w$. Firstly $M$ performs the move $\delta(q_0, \lendmarker, Z_0) = (q, S, \rto)$ that leads it to the initial simulation configuration:  
$(q_0, Z_0 \times 0, 0) \der (q, SZ_0 \times (1:0), 1)$, 
where $q$ is the \emph{main work state} and $S$ is the axiom of the PEG. 


During the simulation the following invariants hold. Below $A$ is a nonterminal of the PEG.

\begin{enumerate}
    \item\label{inv:qA} If the automaton is in the main work state $q$ and on the top of the stack is the pair $A \times i$, then the head is over the cell $i$.
    \item\label{inv:Subw} If the head is over the cell $r$ in a state $q_{A_{\pm}}$ (hereinafter $q_{A_{\pm}} \in \{q_{A_+}, q_{A_-}\}$) and the topmost symbol had been added at the position $l$, then it means the following.
    \begin{enumerate}
        \item[$q_{A_{+}}$]\label{inv:Subw:success} A subword $s = w_l\cdots w_{r-1}$, $r \geq l$, would be parsed by the PEG from $A$ (starting from the position $l$); when $r = l$, we have $s = \eps$. In the other direction: if the PEG parses $w_l\cdots w_{r-1}$ from $A$ starting from the position $l$, then the DPPDA that starts computation from the position $l$ in the main work state $q$ with $A$ on the top on the stack finishes at the position $r$ with (the same) $A$ on the top of the stack, i.e.,
        \[(q, A\alpha\times l:\vec{i}, l) \ders (q_{A_{+}}, A\alpha\times l:\vec{i}, r). \]
        \item[$q_{A_{-}}$]\label{inv:Subw:failure} After the PEG started parsing from $A$ from the position $l$, the computation ended up with a failure at some point in the case of $q_{A_{-}}$. In the other direction: if the PEG fails, then for some $r \geq l$: 
         \[(q, A\alpha\times l:\vec{i}, l) \ders (q_{A_{-}}, A\alpha\times l:\vec{i}, r). \]
    \end{enumerate}         
\end{enumerate}

DPPDA $M$ accepts the input only if the head reaches the symbol $\rendmarker$ in the state $q_{S+}$ (note that the axiom does not occur on the right side of the rules). 
Formally, we add the rule \[\delta(q_{S_{+}}, \rendmarker, Z_0) = (q_f, \eps, \dto),\]
where $q_f$ is the only final state of the DPPDA. So, from the invariant it follows that the DPPDA accepts the input iff the PEG parses the input.

The rest of the construction is the delta's description in Fig.~\ref{Fig:PEGtoDPPDA}. The proof is a straightforward induction on the recursion depth of the PEGs computation. So we describe the behavior of the automaton corresponding to the formal construction in two main cases and check that the invariants hold (the remaining cases are simple).

Each rule is applied to a configuration of the form \[(q, A\alpha \times l:\vec{i}, l).\]

The first case (of concatenation) is illustrated on Fig~\ref{Fig:ConcatRuns}. The automaton works as follows.
It pushes the auxiliary symbol $A_1$ at the same position that $A$ has been pushed (since the invariant~\ref{inv:qA} holds) and then pushes $B$. 
If it reaches a configuration of the form \[(q_{B_{+}}, BA_1A\alpha \times l:l:l:\vec{i}, r'),\] then $B$ has successfully parsed the subword $w_l\cdots w_{r'-1}$ due to the invariant~\ref{inv:Subw:success}, then $B$ is popped due to General rules and DPPDA pushes $C$ at the position $r'$ and goes to the main work state $q$.
If then the DPPDA reaches a configuration of the form 
\[(q_{C_{+}}, CA_2A_1A\alpha\times r':r':l:l:\vec{i}, r),\]
we have that the PEG parsed the word~$w_{r'}\cdots w_{r-1}$ from $C$ and after the sequences of technical pops the automaton comes to the configuration $(q_{A_{+}}, A\alpha \times l:\vec{i}, r)$ that proves that the invariant~\ref{inv:Subw}-$q_{A_{+}}$ holds (the arguments for the other direction are similar). 

\newenvironment{ConcatTape}[0]{%
	\begin{tikzpicture}[ >=stealth]
        \useasboundingbox (-0.4,-2) rectangle (6.1,1.2);
	\edef\sizetape{0.7cm}
	\tikzstyle{tape}=[draw, minimum size=\sizetape]
	\tikzstyle{head}=[arrow box, draw, minimum size=.75cm, arrow box
	arrows={south:.2cm}, fill=lipicsYellow]
    
	\tikzstyle{arrup}=[single arrow, draw, shape border rotate=90, fill=gray!30, single arrow head extend=.001cm, single arrow head indent=.01ex]
	\tikzstyle{stack}=[rectangle split, rectangle split parts=##1,draw, anchor=center, minimum width = 1.5cm]
	\tikzstyle{stack_cell}=[rectangle, draw, anchor=center, minimum width = 1.5cm ]

	\begin{scope}[start chain=tape going right, node distance=-0.15mm]
	    \node [on chain=tape, tape] at (0, 0) (input) {$\lendmarker$};
                    \node [on chain=tape, tape] {$\color{lipicsLineGray}1$};
                    \node [on chain=tape, tape] {\;$\cdots$\;};
            	    \node [on chain=tape, tape] {$\color{lipicsLineGray}l$}; 
            	    \node [on chain=tape, tape] {$\cdots$};
                    \node [on chain=tape, tape] {$\color{lipicsLineGray}r'$}; 
            	    \node [on chain=tape, tape] {$\cdots$};
                    \node [on chain=tape, tape] {$\color{lipicsLineGray}r$};  
		\node [on chain=tape, tape] {$\rendmarker$};		

	\end{scope}
    \edef\posl{4}                
    \edef\posRb{6}
    \edef\posRc{8}                
	\newcommand{\head}[2]{\node [head, yshift=.38cm] at (tape-##1.north) (head##1) {##2};
	\node [yshift=-.5cm] at (head##1.south)  (head##1_cpoint) {};	
	}            		

	\newcommand{\ahline}[2]{
		\draw[->] (head##1.south) to [out=-60,in=-120] (Hanchor##2.south);
	}	
				
	\newcommand{\aaline}[2]{
		\draw[-, very thick] (Hanchor##1.south) to [out=-60,in=-120] (Hanchor##2.south);
    }

   		\tikzstyle{Hanchor}=[rectangle, draw, minimum width=.7cm, minimum height=15pt, fill=lipicsLightGray] 
        
		\newcommand{\HanchorTape}[2]{\node [Hanchor, yshift=-.25cm] at (tape-##1.south) (Hanchor##1) {##2};
		\node [yshift=-.5cm] at (Hanchor##1.south)  (Hanchor##1_cpoint) {};	
		}
		\newcommand{\HanchorAnchor}[3]{\node [Hanchor, yshift=-.25cm] at (Hanchor##1.south) (Hanchor##2) {##3};
		\node [yshift=-.5cm] at (Hanchor##2.south)  (Hanchor##2_cpoint) {};	
		}
        
   		\tikzstyle{arrup}=[single arrow, draw, shape border rotate=90, fill=gray!30, single arrow head extend=.001cm, single arrow head indent=.01ex]
        \HanchorTape{1}{$Z_0$}
        \HanchorTape{2}{$S$}
        \node [yshift=-1cm] at (tape-3.south) (Zdots) {\!$\cdots$\;\;};
        \draw[-,very thick] (Hanchor2.south) to [out=-60,in=-180] ([xshift=1pt]Zdots.west);
        \aaline{1}{2}
        \newcommand{\Astack}[0]{\HanchorTape{\posl}{$A$}
               \draw[-,very thick] ([xshift=-10pt]Zdots.east) to [out=0,in=-180] (Hanchor\posl.west);}
        
        }%
        {%
    \end{tikzpicture}%
}

\begin{figure}
    \begin{subfigure}[t]{0.5\textwidth}
    \centering
        \begin{ConcatTape} 
               \head{\posl}{$q$}
               \Astack
        \end{ConcatTape} 
        \[\big(q, A\alpha \times l:\vec{i}, l) \]
        \caption{Starting configuration for any rule.}
    \end{subfigure}
    \begin{subfigure}[t]{0.5\textwidth}
    \centering
        \begin{ConcatTape}  
               \head{\posl}{$q$}
               \Astack
               \HanchorAnchor{\posl}{A1}{$A_1$}
               \HanchorAnchor{A1}{B}{$B$}
        \end{ConcatTape} 
        \[\big(q, BA_1A\alpha \times l:l:l:\vec{i}, l) \]
        \caption{Start concatenation: recursively call on $B$.}
    \end{subfigure}
    \begin{subfigure}[t]{0.5\textwidth}
    \centering
        \begin{ConcatTape}                      
               \head{\posRb}{$q_{B_{+}}$}
               \Astack
               \HanchorAnchor{\posl}{A1}{$A_1$}
               \HanchorAnchor{A1}{B}{$B$}
        \end{ConcatTape} 
        \[\big(q_{B_{+}}, BA_1A\alpha \times l:l:l:\vec{i}, r') \]
        \caption{Call on $B$ succeed: case $q_{B_{+}}$.}
    \end{subfigure}    
    \begin{subfigure}[t]{0.5\textwidth}
    \centering
        \begin{ConcatTape}                      
               \head{\posRb}{$q$}
               \Astack
               \HanchorAnchor{\posl}{A1}{$A_1$}               
               \HanchorTape{\posRb}{$A_2$} \draw[-,very thick] (HanchorA1.east) to [out=0,in=-180] (Hanchor\posRb.west); 
               \HanchorAnchor{\posRb}{C}{$C$}

        \end{ConcatTape} 
        \[\big(q, CA_2A_1A\alpha \times r':r':l:l:\vec{i}, r) \]
        \caption{Case $q_{B_{+}}$: recursively call on $C$.}
    \end{subfigure}
    \begin{subfigure}[t]{0.5\textwidth}
    \centering
        \begin{ConcatTape}                      
               \head{\posRc}{$q_{C_{+}}$}
               \Astack
               \HanchorAnchor{\posl}{A1}{$A_1$}
               \HanchorTape{\posRb}{$A_2$} \draw[-,very thick] (HanchorA1.east) to [out=0,in=-180] (Hanchor\posRb.west); 
               \HanchorAnchor{\posRb}{C}{$C$}

        \end{ConcatTape} 
        \vspace*{-15pt}
        \[\big(q_{C_{+}}, CA_2A_1A\alpha \times r':r':l:l:\vec{i}, r) \]
        \caption{Call on $C$ succeed: case $q_{B_{+}}$-$q_{C_{+}}$.}
    \end{subfigure}
    \begin{subfigure}[t]{0.5\textwidth}
    \centering
        \begin{ConcatTape}                      
               \head{\posRc}{$q_{C_{+}}$}
               \Astack
               \HanchorAnchor{\posl}{A1}{$A_1$}
               \HanchorTape{\posRb}{$A_2$} \draw[-,very thick] (HanchorA1.east) to [out=0,in=-180] (Hanchor\posRb.west);
        \end{ConcatTape} 
        \vspace*{-15pt}
        \[\big(q_{C_{+}}, A_2A_1A\alpha \times r':l:l:\vec{i}, r) \]
        \caption{Case $q_{B_{+}}$-$q_{C_{+}}$: pop $A_2$ with direction $\dto$.}
    \end{subfigure}
    \begin{subfigure}[t]{0.5\textwidth}
    \centering
        \begin{ConcatTape}                      
               \head{\posRc}{$q_{A_2}$}
               \Astack
               \HanchorAnchor{\posl}{A1}{$A_1$}
        \end{ConcatTape} 
        \vspace*{-15pt}
        \[\big(q_{A_2}, A_1A\alpha \times l:l:\vec{i}, r) \]
        \caption{Case $q_{B_{+}}$-$q_{C_{+}}$: pop $A_1$ with direction $\dto$.}
    \end{subfigure}
    \begin{subfigure}[t]{0.5\textwidth}
    \centering
        \begin{ConcatTape}                      
               \head{\posRc}{$q_{A_{+}}$}
               \Astack
        \end{ConcatTape} 
        \vspace*{-15pt}
        \[\big(q_{A_{+}}, A\alpha \times l:\vec{i}, r) \]
        \caption{Case $q_{B_{+}}$-$q_{C_{+}}$: parsing from $A$ succeed.}
    \end{subfigure}
        \caption{Illustration of Concatenation}\label{Fig:ConcatRuns}
\end{figure}

In the case of reaching either the configuration 
\[(q_{C_{-}}, CA_2A_1A\alpha \times r':r':l:l:\vec{i}, r) \text { or } (q_{B_{-}}, BA_1A\alpha \times l: l : l:\vec{i}, r')\]
earlier, the sequence of pops leads the DPPDA to the configuration $(q_{A_{-}}, A\alpha \times l:\vec{i}, l)$ that proves that the invariant~\ref{inv:Subw}-$q_{A_{-}}$ holds (the arguments for the other direction are similar).


The case of the ordered choice is similar to the case of concatenation. The difference between the cases is as follows. In the case of a configuration \[(q_{B_{+}}, BA_1A\alpha \times l:l:l:\vec{i}, r),\]
the automaton reaches the configuration $(q_{A_{+}}, A\alpha \times l:\vec{i}, r)$ via the technical moves, so this case is trivial. Consider the case of a configuration \[(q_{B_{-}}, BA_1A\alpha \times l:l:l:\vec{i}, r).\]
We illustrate it on Fig.~\ref{Fig:OrderedChoiceDemo}; we begin the demonstration after the push of $B$ since the steps before that are the same as for the concatenation (Fig.~\ref{Fig:ConcatRuns} (a) and (b), provided $r'=r$).
The automaton reaches the configuration $(q, CA_2A\alpha \times l:l:l:\vec{i}, l)$ after which for some $r$
\[\text{either } (q, CA_2A\alpha \times l:l:l:\vec{i}, l) 
\ders (q_{C_{+}}, CA_2A\alpha \times l:l:l:\vec{i}, r) \ders
(q_{A_{+}}, A\alpha \times l:\vec{i}, r),\]
\[\text{or } (q, CA_2A\alpha \times l:l:l:\vec{i}, l) \ders (q_{C_{-}}, CA_2A\alpha \times l:l:l:\vec{i}, r) \ders (q_{A_{-}}, A\alpha \times l:\vec{i}, l).\]


\newenvironment{OrderedChoiceTape}[0]{%
	\begin{tikzpicture}[ >=stealth]
        \useasboundingbox (-0.4,-2) rectangle (6.1,1.2);
	\edef\sizetape{0.7cm}
	\tikzstyle{tape}=[draw, minimum size=\sizetape]
	\tikzstyle{head}=[arrow box, draw, minimum size=.75cm, arrow box
	arrows={south:.2cm}, fill=lipicsYellow]
    
	\tikzstyle{arrup}=[single arrow, draw, shape border rotate=90, fill=gray!30, single arrow head extend=.001cm, single arrow head indent=.01ex]
	\tikzstyle{stack}=[rectangle split, rectangle split parts=##1,draw, anchor=center, minimum width = 1.5cm]
	\tikzstyle{stack_cell}=[rectangle, draw, anchor=center, minimum width = 1.5cm ]

	\begin{scope}[start chain=tape going right, node distance=-0.15mm]
	    \node [on chain=tape, tape] at (0, 0) (input) {$\lendmarker$};
                    \node [on chain=tape, tape] {$\color{lipicsLineGray}1$};
                    \node [on chain=tape, tape] {\;$\cdots$\;};
            	    \node [on chain=tape, tape] {$\color{lipicsLineGray}l$}; 
            	    \node [on chain=tape, tape] {$\cdots$};
                    \node [on chain=tape, tape] {$\color{lipicsLineGray}r$}; 
            	    \node [on chain=tape, tape] {$\cdots$};
		\node [on chain=tape, tape] {$\rendmarker$};		

	\end{scope}
    \edef\posl{4}                
    \edef\posRb{6}
    \edef\posRc{8}                
	\newcommand{\head}[2]{\node [head, yshift=.38cm] at (tape-##1.north) (head##1) {##2};
	\node [yshift=-.5cm] at (head##1.south)  (head##1_cpoint) {};	
	}            		

	\newcommand{\ahline}[2]{
		\draw[->] (head##1.south) to [out=-60,in=-120] (Hanchor##2.south);
	}	
				
	\newcommand{\aaline}[2]{
		\draw[-, very thick] (Hanchor##1.south) to [out=-60,in=-120] (Hanchor##2.south);
    }

   		\tikzstyle{Hanchor}=[rectangle, draw, minimum width=.7cm, minimum height=15pt, fill=lipicsLightGray] 
        
		\newcommand{\HanchorTape}[2]{\node [Hanchor, yshift=-.25cm] at (tape-##1.south) (Hanchor##1) {##2};
		\node [yshift=-.5cm] at (Hanchor##1.south)  (Hanchor##1_cpoint) {};	
		}
		\newcommand{\HanchorAnchor}[3]{\node [Hanchor, yshift=-.25cm] at (Hanchor##1.south) (Hanchor##2) {##3};
		\node [yshift=-.5cm] at (Hanchor##2.south)  (Hanchor##2_cpoint) {};	
		}
        
   		\tikzstyle{arrup}=[single arrow, draw, shape border rotate=90, fill=gray!30, single arrow head extend=.001cm, single arrow head indent=.01ex]
        \HanchorTape{1}{$Z_0$}
        \HanchorTape{2}{$S$}
        \node [yshift=-1cm] at (tape-3.south) (Zdots) {\!$\cdots$\;\;};
        \draw[-,very thick] (Hanchor2.south) to [out=-60,in=-180] ([xshift=1pt]Zdots.west);
        \aaline{1}{2}
        \newcommand{\Astack}[0]{\HanchorTape{\posl}{$A$}
               \draw[-,very thick] ([xshift=-10pt]Zdots.east) to [out=0,in=-180] (Hanchor\posl.west);}
        
        }%
        {%
    \end{tikzpicture}%
}

\begin{figure}
    \begin{subfigure}[t]{0.5\textwidth}
    \centering
        \begin{OrderedChoiceTape}                      
               \head{\posRb}{$q_{B_{-}}$}
               \Astack
               \HanchorAnchor{\posl}{A1}{$A_1$}
               \HanchorAnchor{A1}{B}{$B$}
        \end{OrderedChoiceTape} 
        \[\big(q_{B_{-}}, BA_1A\alpha \times l:l:l:\vec{i}, r) \]
        \caption{Call on $B$ failed: pop $\dto$.}
    \end{subfigure}   
    \begin{subfigure}[t]{0.5\textwidth}
    \centering
        \begin{OrderedChoiceTape}                      
               \head{\posRb}{$q_{B_{-}}$}
               \Astack
               \HanchorAnchor{\posl}{A1}{$A_1$}
        \end{OrderedChoiceTape} 
        \[\big(q_{B_{-}}, A_1A\alpha \times l:l:\vec{i}, r) \]
        \caption{Call on $B$ failed: pop $\uto$.}
    \end{subfigure} 
    \begin{subfigure}[t]{0.5\textwidth}
    \centering
        \begin{OrderedChoiceTape}                      
               \head{\posl}{$q_{A_2}$}
               \Astack
        \end{OrderedChoiceTape} 
        \[\big(q_{A_{2}}, A\alpha \times l:\vec{i}, l) \]
        \caption{Push of $CA_2$}
    \end{subfigure}
    \begin{subfigure}[t]{0.5\textwidth}
    \centering
        \begin{OrderedChoiceTape}                      
               \head{\posl}{$q$}
               \Astack
               \HanchorAnchor{\posl}{A2}{$A_2$}
               \HanchorAnchor{A2}{C}{$C$}
        \end{OrderedChoiceTape} 
        \[\big(q, CA_2A\alpha \times l:l:l:\vec{i}, l) \]
        \caption{Recursive call on $C$}
    \end{subfigure}
    \caption{Illustration of Ordered choice.}\label{Fig:OrderedChoiceDemo}
\end{figure}

The analysis of the remaining cases directly follows from the definitions, so we omit it.
\end{proof}

\subsection*{DPPDA to PEG}

In this subsection, we need DPPDA of a special form for the sake of construction.

\begin{definition}
 Consider (part of) a run $(q, Z \times i, j) \ders (p, (), k)$ at which $Z$ was finally popped. We say that the \emph{pop direction} of $Z$ is $\uto$, $\dto$, $\lto$, $\rto$ depending on the last move's direction and denote it by $d_{q,j}(Z)$ or by $d(Z)$ if the run is fixed and there is no ambiguity.
\end{definition}

Note that the pop direction does not depend on the position $i$ at which $Z$ has been pushed.

\begin{lemma}\label{lemma:DPPDAspecProps}
For each DPPDA there exists an equivalent DPPDA for which the following properties hold.

\begin{enumerate}
    \item\label{prop:PushUD} In the case of pop only moves $\dto, \uto$ are allowed.
    \item\label{prop:PushOneByOne} In the case of push only one symbol is added to the stack. 
    \item\label{prop:Zzero} The bottom marker $Z_0$ remains in the stack until the last step and never occurs on other positions.  
    \item\label{prop:ZzeroFirstSymbol} Without loss of generality, we assume that the bottom marker~$Z_0$ pushed at the position $1$ (at the first input symbol, but not on the left end marker~$\lendmarker$).
    \item\label{prop:LastMoveDown} The last pop direction is $\dto$.    
\end{enumerate}
        
\end{lemma}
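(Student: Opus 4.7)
The plan is to prove each property by a separate local transformation of the transition function, then compose the five transformations in the order (2, 1, 3, 4, 5) so that each subsequent modification preserves the properties already established. Every transformation is a syntactic rewrite that introduces a handful of auxiliary states and, possibly, one fresh stack symbol, and each preserves the language because accepting runs of the new machine are in one-to-one correspondence with accepting runs of the old one.

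I would begin with property~\ref{prop:PushOneByOne}, since it is the cleanest: each multi-push $\delta(q,a,Z) = (q', X_1\ldots X_k, d)$ is replaced by a chain of single-symbol pushes through $k-1$ fresh states. First push $X_k$ in direction $d$ landing in a fresh state $q_1$; then, for $j=2,\ldots,k$, push $X_{k-j+1}$ with direction $\dto$ from $q_{j-1}$ into $q_j$, identifying $q_k = q'$. Because the head is at the new position $j'$ after the first push and every subsequent stay-push keeps it there, the pointer vector attached to the $k$ new symbols is exactly $j':j':\cdots:j':i_n:\vec{i}$, matching the original semantics. Next, for property~\ref{prop:PushUD}, each $\delta(q,a,Z) = (q', \eps, d)$ with $d \in \{\lto, \rto\}$ is replaced by a pop with $\dto$ into a fresh intermediate state $q_{\text{aux}}^d$, followed by the syntactic-sugar move $\hrto$ (or $\hlto$) of Subsection~\ref{sect:DPPDA:properties}, which takes $q_{\text{aux}}^d$ to $q'$. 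Since $\hrto$ expands to a single push followed by a pop with direction $\dto$, the result is still compatible with property~\ref{prop:PushOneByOne}.

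For property~\ref{prop:Zzero} I introduce a fresh stack symbol $\hat Z_0$ as the new initial symbol, rename every occurrence of $Z_0$ on the right-hand side of the original transitions to a fresh regular symbol $Z_0'$, and prepend a short initialization that pushes $Z_0'$ on top of $\hat Z_0$. Since no rewritten transition ever writes $\hat Z_0$ and no transition besides the (eventually unique) accepting one ever pops it, $\hat Z_0$ sits strictly at the bottom for the whole computation. Property~\ref{prop:ZzeroFirstSymbol} is then a trivial tweak of this initialization: consume $\lendmarker$ with a move right first, and perform the push of $Z_0'$ only after the head has reached position~$1$, so that $\hat Z_0$'s pointer is $1$ rather than $0$. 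Finally, property~\ref{prop:LastMoveDown} is enforced by funneling acceptance through a gatekeeper state: replace the original accepting step by a sequence that first verifies the head is at $\rendmarker$, then performs a single $\dto$ pop of $\hat Z_0$ into $q_f$; any original final pop whose direction was $\uto$ (or otherwise) is rerouted through this gatekeeper.

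The main obstacle I expect is the pointer bookkeeping for property~\ref{prop:PushOneByOne}: I must check carefully that splitting a $k$-symbol push into $k$ sequential single pushes produces exactly the same pointer component in the resulting configuration, namely that the chained stay-pushes neither shift $j'$ nor introduce any other index into the vector. Once this invariant is verified, the five transformations commute well enough in the chosen order that the composed construction yields a DPPDA simultaneously satisfying all properties~\ref{prop:PushUD}--\ref{prop:LastMoveDown}, and equivalence follows from the step-by-step correspondence between configurations of the old and the new automaton.
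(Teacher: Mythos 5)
Your proposal is correct and follows essentially the same route as the paper: splitting a $k$-symbol push into a chain of single pushes (with the pointer vector $j':\cdots:j'$ preserved exactly as you verify) for Property~2, replacing a lateral pop by a $\dto$ pop followed by the sugar moves $\hrto/\hlto$ for Property~1, a fresh bottom symbol that is popped with direction $\dto$ only at $\rendmarker$ for Properties~3 and~5, and the observation that the bottom marker's stored position is immaterial for Property~4. The only points worth noting are that the paper's treatment of Property~4 additionally re-encodes every symbol pushed while the head is over $\lendmarker$ as a tagged symbol $[X\lendmarker]$, so that a $\uto$ pop never needs to return the head to position~$0$ (this extra step is what the later DPPDA-to-PEG construction actually relies on), and that in your construction it is $Z_0'$, not $\hat Z_0$, that acquires pointer~$1$ --- the initial symbol's pointer remains $0$ by definition of the initial configuration, which is harmless precisely because its pop direction is $\dto$, the same ``does not matter'' argument the paper uses.
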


\begin{proof}
    For Property~\ref{prop:PushUD} we need the syntactic sugar moves $\hrto$ described in Subsection~\ref{sect:DPPDA:properties}. A pop move $\rto$ can be replaced by a series of moves $\hrto$ and $\dto$, where the pop operations are performed only at moves $\dto$.
    
    Now we assume that Property~\ref{prop:PushUD} holds and describe how to transform a DPPDA to achieve Property~\ref{prop:PushOneByOne}. If $k > 1$ symbols are pushed during a single move $\rto$, we replace them by a single push during the move $\rto$ and push of $k-1$ symbols by the following move $\dto$. If $k$ symbols are pushed during the move~$\dto$, we replace this move with a series of $k$ moves $\dto$ each of which pushes the corresponding symbol.  

   As discussed in Subsection~\ref{sect:DPPDA:properties}, the last move of any successful computation is the pop of $Z_0$. If Properties~\ref{prop:Zzero} and~\ref{prop:LastMoveDown} do not hold together for a DPPDA $M$ with the initial symbol in the pushdown storage $Z_0$, we construct another DPPDA $M'$ with the initial symbol in the pushdown storage $Z_0 \not\in\Gamma_M$ that pushes $Z_0$ at the very first move $\dto$, then simulates $M$ and pops $Z_0$ at each final state of $M$ with the direction $\dto$ which does not change the position of the head that should be on the right end-marker~$\rendmarker$ in the case of an accepting computation. So $M'$ satisfies Properties~\ref{prop:Zzero} and~\ref{prop:LastMoveDown}.   
  
  It is left to prove Property~\ref{prop:ZzeroFirstSymbol}. Due to Properties~\ref{prop:Zzero} and~\ref{prop:LastMoveDown}, the position of adding $Z_0$ to the stack does not matter. As before, we construct an equivalent DPPDA $M'$ from~$M$. If $M$ pushes a symbol $X$ when the head is over $\lendmarker$, $M'$ pushes a symbol $[X\lendmarker]$ to the stack. So, if $M$ arrives at $\lendmarker$ after the pop of $X$, $M'$ simulates this move via the finite control: if it pops $\uto$ a symbol of the form $[X\lendmarker]$, it behaves like the head is over $\lendmarker$. 
\end{proof}

For the rest of the section, we fix a DPPDA $M$ satisfying conditions of Lemma~\ref{lemma:DPPDAspecProps}. We construct an equivalent PEG for $M$ and prove that the PEG generates the languages $L(M)$.

Now we describe the PEG. Non-terminals have the following form: $[qZp\udto]$, $[qZp\uto]$, $[qZp\dto]$, and $[qZ\bar p]$, that have the following meaning. After pushing $Z$ to the stack, $M$ has the state $q$ and when (that) $Z$ is popped $M$ has the state $p$; $\uto$ and $\dto$ indicate the pop direction (on arriving at $p$). Nonterminals $[qZ\bar p]$ are auxiliary and nonterminals $[qZp\udto]$ denote any pop direction, formally $[qZp\udto] \lto [qZp\dto] \por [qZp\uto]$, the order of rules does not matter in our construction.

Below we describe the rules of PEG depending on $\delta(q, a, Z)$ and use the following convention. 
A state $s$ runs all the possible values of the $M$'s states, so the expression
$[qZp\dto] \lto a[rXs\udto][sZp\dto]$
is a shortcut for the rule \[[qZp\dto] \lto a[rXs_1\udto][s_1Zp\dto] \por \dots \por a[rXs_j\udto][s_jZp\dto] \por \dots\]
Note that the order of the $M$'s states $s_i$'s will not affect our construction.

We describe on Figure~\ref{Fig:DPPDAtoPEG} the rules for nonterminals depending on the rule $\delta(q, a, Z)$, where $q$ and $Z$ are fixed parameters and $a\in \Sigma$ runs the alphabet. For different $a$ we add different rules for the same nonterminal, the order of these rules is not significant.

\def\lefm{5pt}        
\begin{figure}
    \begin{center}        
    \setlength{\fboxsep}{2ex}
    \fbox{
    \begin{minipage}{0.8\textwidth}
    \setlength{\columnsep}{-0.5cm}
\begin{center}              
\textbf{Cases for $\delta(q, a, Z)$:}
\end{center}
\begin{multicols}{2}
\begin{enumerate}
    \item $(r, X, \rto)$: 
    \begin{itemize}[topsep=0pt, itemsep=1ex,leftmargin=\lefm]
        \item $[qZp\dto] \lto a[rXs\udto][sZp\dto]$ 
        \item $[qZ\bar p] \lto a[rXs\udto][sZ\bar p]$
        \item $[qZp\uto] \lto \&(a[rXs\udto][sZ\bar p])$
    \end{itemize}
    \columnbreak
    \item $(r, X, \dto)$:
    \begin{itemize}[topsep=0pt,itemsep=1ex,leftmargin=\lefm]
        \item $[qZp\dto] \lto (\&a)[rXs\udto][sZp\dto]$
        \item $[qZ\bar p] \lto (\&a)[rXs\udto][sZ\bar p]$
        \item $[qZp\uto] \lto (\&a)\&([rXs\udto][sZ\bar p])$
    \end{itemize}
    \end{enumerate}
\end{multicols}    
\begin{multicols}{3}
    \begin{enumerate}
        \setcounter{enumi}{2}
    \item $(p, \eps, \uto)$:
    \begin{itemize}[topsep=0pt, leftmargin=\lefm]
        \item $[qZp\dto] \lto \pfail$
        \item $[qZ\bar p] \lto \&a$
        \item $[qZp\uto] \lto \&a$
    \end{itemize}
    \columnbreak
    \item $(p, \eps, \dto)$:
    \begin{itemize}[topsep=0pt, leftmargin=\lefm]
        \item $[qZp\dto] \lto \&a$
        \item $[qZ\bar p] \lto \pfail$
        \item $[qZp\uto] \lto \pfail$
    \end{itemize}
    \columnbreak
    \item $(r, \eps, \udto)$; below $r\neq p$:
    \begin{itemize}[topsep=0pt, leftmargin=\lefm]
        \item $[qZp\dto] \lto \pfail$ 
        \item $[qZ\bar p] \lto \pfail$
        \item $[qZp\uto] \lto \pfail$
    \end{itemize}
    \end{enumerate}
\end{multicols}    
\end{minipage} }    \end{center}

\caption{Construction of the PEG by the $\delta$'s description}\label{Fig:DPPDAtoPEG}
\end{figure}

\begin{lemma}\label{lem:ArbitraryNorderPEGfromDPDA}
    On the first step of the computation of $R(A, au)$, where $A \in \{[qZp\udto], [qZ\bar p]\}$ the rules added for $\delta(q, b, Z)$, $b\neq a$, yields $\pfail$. So the order of rules for each non-terminal does not matter
\end{lemma}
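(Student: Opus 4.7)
The plan is to prove the lemma by case analysis on the five rule forms listed in Figure~\ref{Fig:DPPDAtoPEG}. The nonterminal $[qZp\udto]$ expands (by definition) into $[qZp\dto] \por [qZp\uto]$, so it suffices to verify the statement for the three concrete nonterminal shapes $[qZp\dto]$, $[qZp\uto]$, $[qZ\bar p]$.

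First I would establish two elementary facts about $R$ on an input of the form $au$: for every terminal $b \neq a$, we have $R(b, au) = \pfail$ directly from the definition, and $R(\&b, au) = \pfail$ as well. The second fact unfolds $\&b = \pnot(\pnot b)$: since $R(b, au) = \pfail$ we get $R(\pnot b, au) = \eps$, and then $R(\pnot(\pnot b), au) = \pfail$. Of course $R(\pfail, au) = \pfail$ too, since $\pfail = \pnot \eps$ and $R(\eps, au) = au \neq \pfail$.

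Second, I would inspect the right-hand sides produced for $\delta(q, b, Z)$ in each of the five cases and observe that the very first factor of every concatenation is one of $b$, $\&b$, or $\pfail$. By the concatenation clause of $R$, if this first factor returns $\pfail$, the entire expression returns $\pfail$; combined with the first step, this shows that every rule contributed by transitions $\delta(q, b, Z)$ with $b \neq a$ evaluates to $\pfail$ on $au$. This check is purely mechanical: for instance, in case~1 the rule $[qZp\dto] \lto b\,[rXs\udto][sZp\dto]$ fails since $R(b, au) = \pfail$; in case~2 the rule $[qZp\uto] \lto (\&b)\&([rXs\udto][sZ\bar p])$ fails since $R(\&b, au) = \pfail$; in case~3 the rule $[qZ\bar p] \lto \&b$ fails for the same reason; cases~4 and~5 are immediate as they are either $\&b$ or literally $\pfail$.

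Finally, because $\delta$ is a partial function, there is at most one input symbol $a$ for which $\delta(q, a, Z)$ is defined and for which the associated rules can avoid immediate failure. Hence, in the prioritized choice $e_1 \por e_2 \por \cdots$ collecting all rules for a fixed nonterminal, at most one alternative survives the first step on any given input, so the relative order of the alternatives is immaterial. The only mild subtlety is correctly tracing the semantics of $\&b$ through the two nested $\pnot$s; beyond that, the argument is a direct appeal to the clauses of $R$.
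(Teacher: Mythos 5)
Your proposal is correct and follows essentially the same route as the paper's (much terser) proof: every rule contributed for $\delta(q,b,Z)$ is either $\pfail$ or begins with $b$ or $\&b$, so it fails on input $au$ when $b\neq a$. One phrasing quibble: $\delta$ being a partial function does not mean it is defined for at most one symbol $a$ at a given $(q,Z)$ --- the real point, which your argument already establishes, is that on input $au$ only the rules associated with the letter $a$ can survive the first step.
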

\begin{proof}
    Each right side of a rule is either $\pfail$ or begins with $a$ or $\& a$ for the corresponding $a$ from $\delta(q, a, Z)$.
\end{proof}

The following lemma directly follows from the construction and Lemma~\ref{lem:ArbitraryNorderPEGfromDPDA}, so we omit the proof.


\begin{lemma}\label{lem:PropsForDPDAtoPEG}
     For each $u\in \Sigma^*$ 
     \begin{enumerate}
         \item Either $R([qZp\uto], u) = u$ or $R([qZp\uto], u) = \pfail$.
         \item $R([qZp\uto], u) = u \iff R([qZ\bar p], u) \neq \pfail$.
     \end{enumerate}
\end{lemma}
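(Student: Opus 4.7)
The plan is to prove both parts by a direct case analysis on the form of $\delta(q, a, Z)$, following the five cases listed in Figure~\ref{Fig:DPPDAtoPEG}. By Lemma~\ref{lem:ArbitraryNorderPEGfromDPDA}, on input $u = au'$ every rule that originates from some $\delta(q, b, Z)$ with $b \neq a$ yields $\pfail$ at its first step and therefore contributes nothing to a prioritized choice, so it suffices to analyze the rule matching the first letter $a$ of $u$.

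For Part~1, I would inspect the body of each alternative of $[qZp\uto]$. In Cases~4 and 5 the body is $\pfail$; in Case~3 the body is $\&a$; in Case~1 the body is $\&(a[rXs\udto][sZ\bar p])$; in Case~2 the body is $(\&a)\&([rXs\udto][sZ\bar p])$. Every alternative is thus either $\pfail$ or a concatenation of $\&$-expressions. By the semantics of $\&$, each $\&$-expression returns either the input unchanged or $\pfail$; a concatenation of $\&$-expressions then also returns either $u$ or $\pfail$; and the prioritized choice $\por$ preserves this dichotomy. Hence $R([qZp\uto], u) \in \{u, \pfail\}$.

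For Part~2, I would pair the $[qZ\bar p]$-rule with the $[qZp\uto]$-rule case by case. Cases~3--5 are immediate since the bodies coincide or are both $\pfail$. In Case~1, $R([qZp\uto], u) = u$ iff $R(a[rXs\udto][sZ\bar p], u) \neq \pfail$ by semantics of $\&$, which is precisely the condition $R([qZ\bar p], u) \neq \pfail$. In Case~2, the expression $(\&a)\&([rXs\udto][sZ\bar p])$ succeeds (returning $u$) iff both $\&a$ and $[rXs\udto][sZ\bar p]$ succeed from position $u$; the expression $(\&a)[rXs\udto][sZ\bar p]$ does not fail iff exactly the same two conditions hold, because $\&a$ does not advance the head and so $[rXs\udto][sZ\bar p]$ is invoked at the very same position in both rules. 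In every case, the expansion of the shortcut $s$ into an ordered choice over intermediate states is handled identically on both sides: the $i$-th $\por$-alternative for $[qZp\uto]$ succeeds iff the $i$-th $\por$-alternative for $[qZ\bar p]$ does not fail, and the first such $i$ determines the overall result on both sides.

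I do not expect a genuine obstacle beyond bookkeeping, since the construction of Figure~\ref{Fig:DPPDAtoPEG} is designed so that each $[qZp\uto]$ rule is essentially a non-consuming ($\&$-wrapped) mirror of the corresponding $[qZ\bar p]$ rule. The one point that warrants explicit verification is the interaction of the prioritized choice (over intermediate states $s$) with the $\&$-wrapping, which follows uniformly from Part~1 applied component-wise to each alternative.
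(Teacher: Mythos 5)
Your case analysis is correct and is exactly the argument the paper has in mind: the paper omits the proof, stating only that the lemma ``directly follows from the construction and Lemma~\ref{lem:ArbitraryNorderPEGfromDPDA},'' and your proposal simply fills in those details (restricting to the alternative matched by the first letter via Lemma~\ref{lem:ArbitraryNorderPEGfromDPDA}, then observing that each $[qZp\uto]$-body is the $\&$-wrapped, non-consuming mirror of the corresponding $[qZ\bar p]$-body). The only implicit assumption, shared with the lemma statement itself, is that $\pnot e$ consumes nothing (i.e.\ returns the input unchanged on success), which is the intended semantics.
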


For technical needs, we need conditions of the form
\begin{equation}\label{eq::run:form}
  \forall i' : (q, Z \times i', i) \ders (p, (), d_j[i'])  
\end{equation}
where function $d_j[\cdot]$ is defined as follows: $d_j[i'] = j$ in the case $d_{q,i}(Z) =\,\dto$ and $d_j[i'] = i'$ in the case $d_{q,i}(Z) =\,\uto$. 
Note that the function $d_j[i']$ is either the constant or the identity function depending on $q, Z, i$ (and the input word); we write $d^{q,Z,i}_j[i']$ when it is needed to avoid ambiguity.

\begin{theorem}\label{thm:DPDAtoPEG}
    For each input $w = w_1\cdots w_n, w_i  \in \Sigma$ the following assertions hold.
    \begin{enumerate}
        \item  $\exists j \geq i : \!\big[R([qZp\dto], w_i\cdots w_n) = w_{j}\cdots w_n \iff \forall i' : \![(q, Z \times i', i) \ders (p, (), j)] \land d_{q,i}(Z) =\,\dto\!\big]$
        \item  $\exists j \geq i : \big[ R([qZ\bar p], w_i\cdots w_n) = w_j\cdots w_n \iff \forall i' : [(q, Z \times i', i) \ders (p, (), i')] \land d_{q,i}(Z) =\,\uto\big]$ 
        \item  $R([qZp\uto], w_i\cdots w_n) = w_{i}\cdots w_n \iff \forall i' : [(q, Z \times i', i) \ders (p, (), i')] \land d_{q,i}(Z) =\,\uto$
    \end{enumerate}
\end{theorem}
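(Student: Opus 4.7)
The plan is to prove all three statements of Theorem~\ref{thm:DPDAtoPEG} simultaneously by induction on the length $\ell$ of the DPPDA computation $(q, Z \times i', i) \vdash^\ell (p, (), \cdot)$ for the $(\Leftarrow)$ direction, and by a parallel induction on the recursion depth of the PEG evaluation for the $(\Rightarrow)$ direction. The latter measure is well-founded because every rule constructed in Figure~\ref{Fig:DPPDAtoPEG} either terminates immediately (yielding $\pfail$ or $\&a$) or reads one terminal before recursing; in particular the constructed PEG is complete. The case analysis is driven by the five possible shapes of $\delta(q, w_i, Z)$ fixed by Lemma~\ref{lemma:DPPDAspecProps}.

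The base cases correspond to $\delta(q, w_i, Z) \in \{(p, \eps, \dto), (p, \eps, \uto)\}$, where the run has length one: Cases~3--4 of Figure~\ref{Fig:DPPDAtoPEG} give the correct equivalence by direct inspection (with $j = i$). Case~5, $\delta(q, w_i, Z) = (r, \eps, \udto)$ with $r \neq p$, is equally trivial: all three nonterminals reduce to $\pfail$, matching the fact that the DPPDA pops $Z$ in a state different from $p$. For the inductive step, Cases~1--2 with $\delta(q, w_i, Z) = (r, X, \rto)$ or $(r, X, \dto)$, the run factors uniquely into (i) a push of $X$ plus the sub-run that ends when $X$ is popped in some state $s$ at some position $k_1$, with head landing at $k_2$ (equal to $k_1$ or to the push position of $X$, depending on the pop direction), and (ii) a continuation $(s, Z \times i', k_2) \ders (p, (), \cdot)$. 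Both sub-runs are strictly shorter than $\ell$, so the hypothesis applies and yields $R([rXs\udto], w_{i+1}\cdots w_n) = w_{k_2}\cdots w_n$ (with a one-symbol shift handled by the $\&a$ prefix in the $(r,X,\dto)$ case), together with the corresponding identity for $[sZp\dto]$, $[sZ\bar p]$, or $[sZp\uto]$ determined by the pop direction of $Z$. Concatenating via the rules of Figure~\ref{Fig:DPPDAtoPEG} reconstructs the desired value of $R([qZp\dto], w_i\cdots)$, $R([qZ\bar p], w_i\cdots)$, or $R([qZp\uto], w_i\cdots)$ with the correct terminal position $j$; Lemma~\ref{lem:ArbitraryNorderPEGfromDPDA} guarantees that the rules generated for other letters $b \neq w_i$ do not interfere.

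The main obstacle I expect is the positional bookkeeping, specifically aligning the PEG's output position $j$ with the position of the DPPDA head at the moment of (or just before) the final pop of $Z$. The subtlest point is the handling of $[qZp\uto]$: an upward pop of $Z$ returns the head to $i'$, a position that the PEG never records, so correctness relies on translating ``head returns to the push position'' into ``no input consumed''. Routing through the auxiliary nonterminal $[qZ\bar p]$, whose evaluation tracks the prefix \emph{as if} the final upward pop were a downward one, is what makes this translation routine; indeed the rules for $[qZp\uto]$ are exactly the $\&$-wrapped versions of the rules for $[qZ\bar p]$, and Lemma~\ref{lem:PropsForDPDAtoPEG} then extracts the right Boolean witness. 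In the $(\Rightarrow)$ direction the argument runs backwards: given the PEG evaluation, the unique rule that can fire at the top level is pinned down by $\delta(q, w_i, Z)$ and by which of the three nonterminals we started with, so the derivation can be unwound to build the corresponding (and, by determinism of the DPPDA, unique) computation step by step.
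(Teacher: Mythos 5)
Your proposal matches the paper's own argument in all essentials: a two-directional induction ($(\Rightarrow)$ on the depth of the PEG evaluation, $(\Leftarrow)$ on the length of the DPPDA run), base cases given by the pop rules (Cases 3--5), the inductive step factoring the run through the push of $X$, its terminating pop in state $s$, and the continuation from $Z$, with Lemma~\ref{lem:PropsForDPDAtoPEG} reducing Assertion~3 to Assertion~2 and Lemma~\ref{lem:ArbitraryNorderPEGfromDPDA} disposing of rule order. The only nit is your side remark that every non-terminal rule ``reads one terminal before recursing'' --- in the $(r, X, \dto)$ case the rule begins with $\&a$ and consumes nothing --- but this affects only your aside about well-foundedness, not the induction you actually carry out.
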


We assume that $j=n+1$ in the case $R([qZp\dto], w_i\cdots w_n) = \eps$, i.e., when the PEG has processed the whole (rest of) input, and vice versa (when the head is over the position~$n+1$, PEG processed the whole input).

\begin{proof}
    According to Lemma~\ref{lem:PropsForDPDAtoPEG} Assertion~3 follows from Assertion~2, so we prove only Assertions 1-2
that can be written as a single assertion
\begin{equation}
 \exists j \geq i : \big[R([qZ\hat p], w_i\cdots w_n) = w_{j}\cdots w_n \iff \forall i' : (q, Z \times i', i) \ders (p, (), d_j[i'])\big]    
\end{equation}
where $\hat p = p\dto$ if $d_{q,i}(Z) = \dto$ and $\hat p = \bar p$ if $d_{q,i}(Z) = \uto$.
    
    Proof for the implication ($\Rightarrow$) is by induction on PEGs derivation steps. The base case is Cases 3-5 and it is obvious, so we describe briefly only Case 3. 
For the rule $[qZp\dto] \lto \pfail$ the pop direction of the PEG rule does not coincide with the pop direction of the DPPDA move, so both sides of implication in Assertion~1 are false. For the rule $[qZ\bar p] \lto \&a$ in the Assertion~2 $r=i$ and  DPPDA does exactly one move $(q, Z \times i', i) \der (p, (), i')$, so both sides of the implication hold. Cases 4-5 are similar. 

Now we prove the induction step. In Cases 3-5 we have derivations of length 1 that are the base case, so in the case of the induction step only Cases 1-2 hold. Let $w_i = a$. In Case 1, if $R([qZ\hat p], w_i\cdots w_n) = w_j\cdots w_n$, we have that there exist positions $j, k$ and a state $s$ such that $R([rXs\udto], w_{i+1}\cdots w_n) = w_k\cdots w_n$ and $R([sZ\hat p], w_k\cdots w_n) = w_j\cdots w_n$. So by induction hypothesis, we have that 
\[ R([rXs\udto], w_{i+1}\cdots w_n) = w_k\cdots w_n \Rightarrow (r, X \times (i+1), i+1) \ders (s, (), k) \text{\quad and}\]
\[ R([sZ\hat p], w_{k}\cdots w_n) = w_j\cdots w_n \Rightarrow \forall i': (s, Z \times i', k) \ders (p, (), d_j[i']) \]
Combining all together, we obtain 
\[\forall i':  (q, Z \times i', i) \der (r, XZ \times (i+1, i'), i+1) \ders (s, Z \times i', k) \ders  (p, (), d_j[i'])\]
that proves Case 1. Case 2 differs from Case 1 only by the first move, so we omit the proof since it is the same.

Proof for the implication ($\Leftarrow$) is by induction on DPPDA moves. 
So, the assertion
\begin{equation}\label{ProofDPPDAtoPEG::indInvL}
  \forall i' : (q, Z \times i', i) \ders (p, (), d_j[i'])  
\end{equation}
holds.
 
The base case is the computation of a single move (Cases 3-5). Since each move is either a push or a pop, we have that this move is a pop. 

So if the pop direction $d_{q,i}(Z)$ is $\dto$ the (only) right side of Assertion 1 holds and it implies 
$R([qZp\dto], w_i\cdots w_n) = w_i\cdots w_n$ because of the rule $[qZp\dto] \lto \&a$.
If the pop direction is $\uto$ the right sides of Assertions 2-3 hold and the rules
$A \lto \&a$,  $ A \in \{[qZ\bar p], [qZp\uto]\}$ imply 
$R(A, w_i\cdots w_n) = w_i\cdots w_n$, so the base case hold.

Now we prove the induction step. If the first move is a pop, then we have the base case, so the first move is a push. So, the run~\eqref{ProofDPPDAtoPEG::indInvL} has a form 
\[ (q, Z \times i', i) \der (r, XZ \times (l, i'), l)  \ders (s, Z \times i' , m) \ders (p, (), d_j[i']) \]
where $l \in \{i, i+1\}$ depending on the direction $\dto$ or $\rto$ and $m$ is either a constant if $d_{r,l}(X) = \dto$ or $\forall l' : (r, X \times l', l)  \ders (s, (), l')$ if $d_{r,l}(X) = \uto$, so there exists a constant $k$ such that 
for $m = d^{r,X,l}_k[l']$  the relation 
\[\forall l' : (r, X \times l', l)  \ders (s, (), d^{r,X,l}_k[l'])\]
holds and for the same reason the relation
\[\forall j' : (s, Z \times j' , m) \ders (p, (), d^{s,Z,m}_{j}[j'])\]
holds as well. 
Induction hypothesis proves the implications, where 
\[\forall l' : (r, X \times l', l)  \ders (s, (), d_k[l']) \Rightarrow R([rXs\udto], w_l\cdots w_n) = w_{m}\cdots w_n \]
and 
\[\forall j' : (s, Z \times j' , m) \ders (p, (), d_j[j']) \Rightarrow R([sZ\hat p], w_m\cdots w_n) = w_j\cdots w_n \]
So $\forall i' : (q, Z \times i', i) \ders (p, (), d_k[i']) \Rightarrow $
\[ R([qZ\hat p],  w_i\cdots w_n) = R([sZ\hat p], R([rXs\udto], w_l\cdots w_n)) = w_j\cdots w_n\] and we have proved the induction step.
\end{proof}

\begin{corollary}
    $R(S, w) = \eps \iff (q_0, Z_0\times 1, 1) \ders (q_f, (), |w|+1)$
\end{corollary}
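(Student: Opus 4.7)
My plan is to instantiate Assertion~1 of Theorem~\ref{thm:DPDAtoPEG} with $q = q_0$, $Z = Z_0$, $p = q_f$, and $i = 1$, taking the axiom of the constructed PEG to be $S = [q_0 Z_0 q_f \dto]$. With these choices the theorem yields: there exists $j \geq 1$ such that
\[
R(S, w_1\cdots w_n) = w_j\cdots w_n \iff \forall i' : (q_0, Z_0 \times i', 1) \ders (q_f, (), j) \;\land\; d_{q_0,1}(Z_0) = \dto.
\]
By the convention stated immediately after the theorem, the equality $R(S,w) = \eps$ is identified with $j = n+1 = |w|+1$, so the corollary should drop out once I match both sides of this biconditional to both sides of the corollary.

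For the ``$\Rightarrow$'' direction I would simply specialise the $\forall i'$ clause to $i' = 1$; this produces exactly the accepting run $(q_0, Z_0 \times 1, 1) \ders (q_f, (), |w|+1)$ required by the corollary. For the ``$\Leftarrow$'' direction I would start from the given run and need to upgrade it to a statement valid for every $i'$. Here I plan to invoke Properties~\ref{prop:Zzero} and~\ref{prop:LastMoveDown} of Lemma~\ref{lemma:DPPDAspecProps}: $Z_0$ remains on the stack until the very last step and that last pop is $\dto$, so $d_{q_0,1}(Z_0) = \dto$. The key observation is that the pointer attached to a stack symbol is consulted only if that symbol is eventually popped in the $\uto$ direction; since $Z_0$ is popped $\dto$, the computation starting from $(q_0, Z_0 \times i', 1)$ proceeds and terminates identically for every choice of $i'$. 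This provides the universally quantified hypothesis of Assertion~1, which then gives $R(S,w) = w_{n+1}\cdots w_n = \eps$.

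The step I expect to be the main obstacle is precisely this quantifier mismatch: Assertion~1 is phrased for all pointer values $i'$, whereas the corollary speaks only about the canonical $i' = 1$. The resolution is the pointer-irrelevance observation above, which is a direct consequence of the definition of the $\ders$ relation together with the fact that the special form of $M$ guaranteed by Lemma~\ref{lemma:DPPDAspecProps} forces the final pop of $Z_0$ to be in the $\dto$ direction.
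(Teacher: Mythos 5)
Your proposal follows essentially the same route as the paper: both reduce the corollary to Assertion~1 of Theorem~\ref{thm:DPDAtoPEG} instantiated at $q_0, Z_0, i=1$, with $R(S,w)=\eps$ identified with $j=n+1$. Your explicit resolution of the quantifier mismatch --- the pointer of $Z_0$ is never consulted because Properties~\ref{prop:Zzero} and~\ref{prop:LastMoveDown} force its pop direction to be $\dto$, so the run is independent of $i'$ --- is a detail the paper leaves implicit, and it is correct.

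The one point you miss is that $M$ may have several final states $q_f^1,\ldots,q_f^m$, so the axiom cannot in general be a single nonterminal $[q_0 Z_0 q_f\dto]$; the paper sets $S \lto [q_0 Z_0 q_f^1\dto] \por \ldots \por [q_0 Z_0 q_f^m\dto]$ and observes that, by determinism of $M$ together with Assertion~1, at most one of these nonterminals can parse $w$ (the unique run from the initial configuration ends in at most one final state), so the prioritized choice behaves like a disjunction and its order is irrelevant. This is a small but genuine addition needed to make the argument cover the general case.
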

\begin{proof}
    $S \lto [q_0 Z_0 q_f^1\dto] \por \ldots \por [q_0 Z_0 q_f^m\dto]$ for all final states $q_f^i$. Due to Assertion~1 of Theorem~\ref{thm:DPDAtoPEG} at most one of nonterminal $[q_0 Z_0 q_f^1\dto]$ parses $w$ and it does it iff $M$ comes from the initial configuration to an accepting one.
\end{proof}

\section{Linear-Time Simulation of 2DPPDA}

Our linear-time simulation algorithm for 2DPPDA is almost the same as S.~Cook's algorithm for 2DPDA~\cite{Cook1970_2DPDA}. One can find the detailed exposition in~\cite{AHU1974design} and \cite{Gluck_2013}. We describe here the algorithm on the general level; firstly we provide the formal statement of the section's result. 

\begin{theorem}\label{theorem:CooksAnalogue}
    Let $M$ be a 2DPPDA. The language $L(M)$ is recognizable in time $O(n)$ in the RAM model.
Moreover, there exists an $O(|w|)$ (in RAM) simulation algorithm for $M$ on the input $w$.
\end{theorem}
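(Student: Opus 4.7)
The plan is to lift S. Cook's memoized simulation of 2DPDAs (as exposed in~\cite{AHU1974design,Gluck_2013}) to 2DPPDAs. Two features of our model make the adaptation nearly immediate. First, by the convention of Subsection~\ref{sect:DPPDA:properties}, every transition of a 2DPPDA is either a push or a pop, so no stack-neutral ``move'' steps need to be amortized. Second, the pointer $i$ attached to a pushed symbol $Z$ influences the run only at the single moment $Z$ is popped, and only when the pop direction is $\uto$; this means $i$ need not be part of any memoization key and can be read from the live stack exactly when needed.

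I would maintain a table $T$ indexed by surface configurations $(q, j, Z) \in Q \times \{0, \ldots, |w|+1\} \times \Gamma$, with three possible statuses per entry: \emph{undefined}, \emph{pending}, or \emph{filled}. A filled entry stores a triple $(q', d, j')$ with the semantics: starting in state $q$ at position $j$ with $Z$ on top of the stack (pushed at an arbitrary position~$i$ left unspecified), the automaton eventually pops $Z$ in state $q'$ via direction $d \in \{\dto, \uto\}$, with the head at $j'$ at the moment of the pop. The post-pop head position is then $j'$ if $d = \dto$ and $i$ if $d = \uto$, with $i$ retrieved from the top cell of the simulated stack. There are $O(|Q|\,|\Gamma|\,(|w|+2)) = O(|w|)$ possible entries.

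The core is a recursive procedure $\mathrm{Term}(q, j, Z)$ that fills $T[q, j, Z]$ by consulting $\delta(q, w_j, Z)$. If the transition is a pop, the terminator is returned directly. If the transition is a push of $\beta = X_1 \cdots X_k$ producing new state $q_1$ and head position $j_1$, the procedure calls $\mathrm{Term}(q_1, j_1, X_1)$, uses its result together with the push position $j_1$ (for the $\uto$ case) to form the surface configuration obtained after $X_1$ is popped, calls $\mathrm{Term}$ on $X_2$, and so on through $X_k$; once all of $\beta$ is gone the top is $Z$ again at some surface configuration $(q'', j'', Z)$, and one final call $\mathrm{Term}(q'', j'', Z)$ yields the terminator of $(q, j, Z)$ itself. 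The outermost driver simulates $M$ on the real stack, short-circuiting each push via $T$, and accepts iff it reaches an accepting configuration. Any re-entry into a \emph{pending} entry (whether by $\mathrm{Term}$ or by the driver) certifies a deterministic infinite loop and causes rejection.

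For the time bound, at most $O(|w|)$ entries are ever filled, and each fill performs $O(1)$ direct work plus at most $|\beta|+1 = O(1)$ recursive calls (the maximum push length is a constant of $M$). Charging each recursive call either to the new fill it causes or to an $O(1)$ cache lookup, the grand total is $O(|w|)$, all steps being constant-time in the RAM model with ordinary indexing. The main obstacle is the same amortization argument Cook faced: showing that the mutual recursion bottoms out with work charged to distinct surface configurations. In our setting this goes through cleanly thanks to the push/pop-only convention; the only genuinely new ingredient is the pointer-retrieval step used whenever a cached $\uto$ terminator is consumed, which is $O(1)$ per consumption and does not affect the asymptotics.
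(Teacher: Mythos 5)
Your proposal is correct and follows essentially the same route as the paper: Cook's memoized terminator computation over the $O(|w|)$ surface configurations, with the single new ingredient being that the pointer of a pushed symbol is consulted only at the moment of an $\uto$ pop and therefore need not enter the memoization key. Your bookkeeping (storing the pop direction in the table and retrieving the pointer from the live stack) is just a slightly more explicit packaging of the paper's observation that $C' = f(A,i,D)$ rather than $f(A,D)$.
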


We begin with definitions. A \emph{surface configuration} of and 2DPPDA is a triple $(q, A, i)$ of the current state $q$, the symbol on the top of the stack $A$, and the head's position $i$.
 Note that since the description of the automaton is fixed, the number of surface configurations is $O(n)$.
 Any configuration $(q, A\alpha\times \vec{j}, i)$ has the corresponding surface configuration $(q, A, i)$. We omit $\vec{j}$ from the configuration for the sake of notation: so, each surface configuration can be considered as a configuration (even if this configuration is unreachable on the processing of the input). Since the values of $\vec{j}$ are used only to determine the head's arrival position after $\uto$ moves, our ignorance does not affect the following definitions and constructions.  

So, we can run the automaton starting from a surface configuration $C = (q, A, i)$. We define relations $\mder$ and $\yield$ on surface configurations as follows. Let $D = (p, B, j)$. We say that $C \models D$ if $(q, A, i) \der (p, BA, j)$, i.e., the automaton pushes $B$ to the stack at the surface configuration $C$.
We say that $C \yield D$ if $(q, A, i) \ders (p, B, j)$ and there is no configuration $(p',B',j')$ such that $(q, A, i) \ders (p',B',j') \ders (p, B, j)$. So $C \yield D$ means that starting the computation in the surface configuration $C$ with the stack height $h$ the automaton firstly returns to the height $h$ in the surface configuration $D$. We denote by $\mders$ and $\yields$ reflexive and transitive closures of the relations $\models$ and $\yield$ respectively. A surface configuration $D = (p, B, j)$ is a \emph{terminator} for the configuration $C$ if $C \yields D$ and the automaton pops $B$ on the move right after $D$. Denote the terminator of the surface configuration $C$ as $T(C)$. Note that it is possible that $T(C) = C$.

The idea of the linear-time simulation algorithm is as follows. If we compute a terminator for each surface configuration (maybe for some surface configurations we find that they have no terminators), then we have computed the terminator $T_0 = (q_T, Z_0, i)$ for the initial surface configuration $C_0 = (q_0, Z_0, 0)$. If $q_T \in F$ the input is accepted, otherwise it is rejected. Terminators are computed for all surface configurations reachable from $C_0$ via the following dynamical programming algorithm. If during a recursive call, it was computed that $T(C) = D$, then the result stored in a memoization table $T[C] := D$, and if there would be another recursive call $T(C)$, the result will be returned in $O(1)$. We also use a memoization table to store the information, whether $T()$ was called for the configuration $C$. If at some point the algorithm shall compute $T(C)$ and $T(C)$ has been called previously, it means that the automaton has come to an infinite loop, so we terminate the algorithm with the rejection of the input.

So, the initial configuration for which the computations start is $C = C_0$. If at surface configuration $C = (q, A ,i)$ the action is pop, then $T(C) = C$. Otherwise $C \mder D$ for some surface configuration $D$ (computable in $O(1)$ via the transition table). So we compute $T(D) = (p, B , j)$ recursively, get the state $p'$ and the position $j'$ after the automatons move at $T(D)$ and obtain that $C \yield C' = (p', A , j')$, then we recursively compute $T(C')$ and get that $T(C) = T(C')$. So, we have finished the algorithm's description.

In fact, the only difference with S.~Cook's algorithm is that $j'$ can be equal to $i$ if the DPPDA returns the head to the cell of the $B$'s push. But this difference does not affect the construction. More precisely, assume that the configuration $C'$ is computed via the function $f$ that depends on configurations $C$ and $D$. But the exact arguments of $f$ vary between S.~Cook's and our algorithms: in the former case $C' = f(A, D)$, in the latter case $C' = f(A, i, D)$. Also for our construction important to deal with only surface configurations corresponding to the pushes. For a classical 2DPDA, it does not matter whether $A$ was pushed at $i$ if we begin the computation from $(q, A, i)$, but for 2DPPDA it is important.

Let us analyze the algorithm's complexity. Note that for each surface configuration $C$ we make at most $2$ recursive calls, so since there are $O(n)$ surface configurations, the total number of calls is $O(n)$. Since computations at each call take $O(1)$, we conclude that the whole algorithm works in $O(n)$.

\section{Structural Results}

We use the computational model to obtain new structural results about the PELs. 

\begin{lemma}\label{Lemma:LeftConcatDCFL}
    Let $X$ be a DCFL and $Y$ be a PEL. Then $XY$ is a PEL. 
\end{lemma}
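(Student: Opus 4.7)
The plan is to use the DPPDA model: by the PEG $\leftrightarrow$ DPPDA equivalence established above, it suffices to construct a DPPDA $M$ with $L(M)=XY$. Since $X$ is a DCFL, take a DPDA $M_X$ accepting $X$ by final state (which is a DPPDA that never uses pointers). For $Y$, take the specific DPPDA $M_Y$ produced by the PEG-to-DPPDA construction of the previous section, applied to a PEG whose axiom has the form $S \lto S'(\pnot \bullet)$ so that ``success'' for $M_Y$ coincides with having consumed the whole remaining input. This construction gives an explicit success/failure protocol: $M_Y$ always terminates by popping its initial-simulation symbol in one of two distinguished states $q_{S_+}$ or $q_{S_-}$.

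The DPPDA $M$ would operate as follows. It runs $M_X$ on the input. Each time $M_X$ enters a final state at head position $i$ -- exactly when $w_1\cdots w_{i-1}\in X$ -- $M$ pauses $M_X$, pushes a fresh ``checkpoint'' symbol $\mu_q$ (encoding $M_X$'s current state $q$) on top of the stack, and then pushes $M_Y$'s initial simulation symbol and enters $M_Y$'s main working state. Since the push of $\mu_q$ is performed with the head at position $i$, the pointer stored with $\mu_q$ equals $i$. Then $M$ simulates $M_Y$ on the suffix $w_i\cdots w_n\rendmarker$. Because $M_Y$ never sees anything at or below $\mu_q$, it behaves exactly as in isolation.

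When $M_Y$ terminates, $\mu_q$ is re-exposed in state $q_{S_+}$ or $q_{S_-}$. In the success case (head at $\rendmarker$, state $q_{S_+}$), $M$ pops $\mu_q$ with direction $\dto$ and enters a cleanup phase that pops the residual $M_X$-stack with direction $\dto$ (staying at $\rendmarker$) and reaches a final state. In the failure case, $M$ pops $\mu_q$ with direction $\uto$: the stored pointer returns the head to position $i$ and the top of $M_X$'s stack is re-exposed; $M$ resumes $M_X$ in state $q$ exactly as it was paused, continuing the search for the next split point. If $M_X$ eventually halts (or reaches $\rendmarker$ in a non-final state) without any $M_Y$-invocation succeeding, $M$ rejects. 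Correctness then follows from the observation that $w\in XY$ iff there is some position $i$ at which $M_X$ is in a final state and $M_Y$ accepts the suffix from $i$.

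The main obstacle is keeping the handoff to and from $M_Y$ completely clean: $M_Y$ must always re-expose $\mu_q$, in a state from which $M$ can distinguish success from failure, and it must do so without ever disturbing $\mu_q$ or the stack below. This is precisely what the $q_{S_\pm}$ discipline of the PEG-to-DPPDA construction provides, and the added $(\pnot \bullet)$ in the axiom guarantees that ``success'' means the entire suffix was parsed by $Y$ (not merely some prefix of it). The remaining items are routine: converting $M_X$ to an $\eps$-move-free form so that ``final state at position $i$'' is a well-defined DPPDA configuration, and treating the boundary cases $\eps\in X$ (insert a checkpoint before $M_X$'s first real move) and $\eps\in Y$ (handled automatically by $M_Y$ succeeding immediately on $\rendmarker$).
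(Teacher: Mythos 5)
Your proposal is correct and follows essentially the same route as the paper: simulate the DPDA $M_X$, checkpoint at each accepting configuration, hand off to the PEG-derived DPPDA $M_Y$ on the suffix, and on failure use a $\uto$ pop of the checkpoint symbol to return the head and resume $M_X$. Your extra care about the $q_{S_\pm}$ success/failure protocol and the $(\pnot\,\bullet)$ normalization of $Y$'s axiom makes explicit details that the paper's proof leaves implicit, but the construction and correctness argument are the same.
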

\begin{proof}
    We describe a DPPDA $M$ recognizing $XY$ that simulates a DPDA $M_X$ recognizing $X$ and a DPPDA $M_Y$ recognizing $Y$, constructed by a (well-formed) PEG.
    
    DPPDA $M$ simulates $M_X$ until it reaches an accepting state. Then it pushes the information of the state to the stack, then pushes $Z_0$ (of $M_Y$) and simulates $M_Y$. If $M_Y$ accepts the rest of the input, then the whole input is accepted. Otherwise, $M$ pops symbols from the stack until it reaches the info about the $M_X$ state and continues the simulation until it reaches an accepting state again. This process is continued until either $M_Y$ accepts, or $M_X$ reaches the end of the input (and $M_Y$ rejects $\eps$).
    
    The correctness easily follows from the construction. During the process $M$ tests all the prefixes of the input from $X$ and checks whether the corresponding suffixes belong to $Y$.         
\end{proof}

We use the notation $\PEL$, $\DCFL$, and $\REG$ for the language classes in formulas (the last one denotes regular languages).
Denote by $\Gamma_{\REG}(\DCFL)$ the regular closure of DCFLs; this class is defined as follows. 

$L \in \Gamma_{\REG}(\DCFL)$ if there exists a regular expression (RE) $R$ over an alphabet $\Sigma_{m} = \{a_1, \ldots, a_m\}$ and DCFLs $L_{a_1}, \ldots, L_{a_m}$ such that if we replace $a_i$ by $L_{a_{i}}$ in $R$ the resulting expression $\psi(R)$ describes $L$.

\begin{lemma}\label{Lemma:GammaRegDCFL}
   $\Gamma_{\REG}(\DCFL) \subseteq \PEL$.
\end{lemma}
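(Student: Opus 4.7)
Our plan is to prove by structural induction on a regular expression $R$ with $\psi(R) = L$ the strengthened claim: for every $X \in \Gamma_{\REG}(\DCFL)$ and every PEL $Y$, the concatenation $X \cdot Y$ is a PEL. Taking $Y = \{\varepsilon\}$ (a trivial PEL) then recovers the lemma. The base case, where $R = a_i$ and $X = L_i$ is a DCFL, is exactly Lemma~\ref{Lemma:LeftConcatDCFL}. For $X = X_1 \cup X_2$ we have $XY = X_1 Y \cup X_2 Y$, a union of PELs by the induction hypothesis and hence a PEL via the Boolean closure of $\PEL$. For $X = X_1 X_2$ we reassociate as $X_1(X_2 Y)$ and apply the induction hypothesis twice: first to $X_2$ with the PEL $Y$ to obtain $X_2 Y \in \PEL$, then to $X_1$ with the PEL $X_2 Y$.

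The main obstacle is the star case $X = X_1^*$, since the natural identity $X_1^* Y = Y \cup X_1 \cdot X_1^* Y$ is self-referential. We would not invoke the induction hypothesis on $X_1$ as a black box---a DPPDA for $X_1$ does not obviously provide prefix matching, since its acceptance is tied to the end-of-input marker---but instead construct a DPPDA $M$ for $X_1^* Y$ directly, one level deeper. Writing $X_1 = \psi(R_1)$ for a regex $R_1$ over the DCFL alphabet $\{L_1, \ldots, L_k\}$, we convert $R_1$ into an NFA $A_1 = (Q_1, \Sigma_k, \delta_1, s_0, F_1)$ and let $M_i$ be the DPDA for $L_i$. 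The machine $M$ performs a depth-first traversal interleaving the outer $X_1^*$-loop with $A_1$: at each head position it first runs $M_Y$ on the suffix; on failure it enters $A_1$ at $s_0$, and along each NFA transition labelled $a_i$ it invokes $M_i$ from the current position after pushing a stack frame tagged with that position. Every time $M_i$ enters an accepting state at some position $p'$, $M$ pushes a further frame and recursively self-calls at $p'$ (to consume the remaining $X_1$-chunks and the $Y$-tail); if the self-call rejects, a $\uto$-pop restores the head to the tagged position and $M_i$ continues searching for the next accepting state, failing over to the next $A_1$-transition and finally returning failure to its caller once both $A_1$ and $M_i$ are exhausted. Only $\uto$-pops are used for head movement, so $M$ remains a 1-way DPPDA.

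The only genuine subtlety is termination: if $\varepsilon \in X_1$ the recursion can spin in place. We would preempt this by decomposing $X_1 = X_1' \cup N$ with $N \subseteq \{\varepsilon\}$ and $\varepsilon \notin X_1'$, noting that $X_1^* = (X_1')^*$ and that $X_1'$ remains inside $\Gamma_{\REG}(\DCFL)$ (DCFLs are closed under removal of a single word via intersection with $\overline{\{\varepsilon\}}$, and this propagates up through the regex by standard $\varepsilon$-elimination). After this normalization every recursive call of $M$ advances the head by at least one symbol, so the recursion depth is bounded by $|w|$, and correctness of $M$ follows by induction on the recursion depth against the defining identity $X_1^* Y = Y \cup X_1 \cdot X_1^* Y$.
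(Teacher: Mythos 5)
Your argument is correct in substance but organized quite differently from the paper's. The paper gives a single monolithic construction: it fixes a complete DFA $\A$ for the \emph{whole} regular expression $R$ (after making every substituted $L_i$ free of $\eps$ by pushing $\eps$-transitions into the automaton and re-determinizing), and builds one DPPDA that walks $\A$, simulating the DPDA $M_j$ along each edge and using $\uto$-pops to roll back the head and enumerate longer accepted prefixes or later letters; union, concatenation and star are all absorbed into the DFA and never treated as separate cases. You instead run a structural induction on $R$ with the strengthened claim $\psi(R)\cdot Y\in\PEL$ for every PEL $Y$, which dispatches the atom case by Lemma~\ref{Lemma:LeftConcatDCFL}, union by $\Gamma_{\Bool}(\PEL)=\PEL$, and concatenation by reassociation --- all clean --- at the price of rebuilding essentially the paper's entire backtracking machine inside the star case, since there $X_1$ is an arbitrary $\psi(R_1)$ and, as you correctly observe, the identity $X_1^*Y=Y\cup X_1(X_1^*Y)$ blocks a black-box use of the inductive hypothesis. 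Your route buys modularity and isolates where the pointer mechanism is genuinely needed; the paper's buys a single correctness and termination argument and avoids relying on the Boolean closure of $\PEL$ for this lemma. Three points in your star case need tightening: (i) the $\eps$-normalization must be pushed down to the letters --- each $L_i$ occurring in $R_1$ must itself be made $\eps$-free with the regex adjusted accordingly (the paper does exactly this), since otherwise a cycle of edges whose languages contain $\eps$ makes the traversal of $A_1$ non-terminating even though $\eps\notin X_1$; your termination argument only bounds the outer self-calls, not the walk inside one chunk; (ii) the DPDAs $M_i$ must be normalized never to loop without consuming input (the paper invokes the standard construction), or else ``continues searching for the next accepting state'' can hang; (iii) as written, the self-call fires after every accepting configuration of a single $M_i$, i.e.\ after one NFA edge, whereas a chunk of $X_1$ may require a multi-edge path through $A_1$ --- the recursion must be parameterized by the current NFA state, with the exit to the outer loop (try $Y$, or restart at $s_0$) permitted only at final states of $A_1$. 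None of these is fatal; all are handled automatically by the paper's DFA-for-all-of-$R$ formulation.
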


Firstly we describe the proof idea. 
We provided the proof of Lemma~\ref{Lemma:LeftConcatDCFL} to generalize it as follows. 
In the case of a single concatenation, we have a kind of linear order for an exhaustive search.
In the case of $\Gamma_{\REG}(\DCFL)$ we will perform an exhaustive search in the order corresponding to a 
(graph of) deterministic finite automaton (DFA) recognizing $R$. If the input word~$w$ belongs to $L \in \Gamma_{\REG}(\DCFL)$, 
then it can be split into subwords $w_1\cdots w_k = w$ such that there exists a word $\alpha_1 \cdots \alpha_k $ generated by $R$, $\alpha_i \in \Sigma_m$,  such that $w_i \in L_{\alpha_i}$.
So, the exhaustive search finds the split of $w$ by considering $\alpha_1 \cdots \alpha_k$ in the length-lexicographic order and considering $w$'s subwords  
$w_i \in L_{\alpha_i}$ ordered by the length. If a word $w_1 \in L_{\alpha_1}$ is the shortest prefix, the DPPDA tries to find the shortest $w_2 \in L_{\alpha_2}$ and so on.
If at some point the DPPDA failed to find $w_{j+1} \in L_{\alpha_{j+1}}$, it rollbacks to $\alpha_{j}$ and tries to find a longer word $w_j \in L_{\alpha_{j}}$. If it fails, then it rollbacks to $\alpha_{j-1}$ and so on. During the search of $w_j$, the DPPDA simulates a DPDA~$M_{\alpha_j}$ recognizing $L_{\alpha_{j}}$.

\begin{proof}
    Fix a language $L \in \Gamma_{\REG}(\DCFL)$.
    Let $R$ be the RE over $\Sigma_m$ and $L_{a_1}, \ldots, L_{a_m} \in \DCFL$ such that $\psi(R) = L$.
    Let $\A$ be a complete
    \footnote{The transition function is defined for each pair of a state and a letter.} DFA recognizes the language generated by~$R$.
    We consider the graph representation of $\A$.
    Assume that each language $L_i$ does not contain an empty word. 
    Otherwise we can add $\eps$ transitions to $\A$ that duplicate transitions labeled by $a_i$ (if $\eps \in L_i$)
    and convert the resulting NFA to a DFA (this transformation justifies the assumption in our construction below).
    
    Now we describe a DPPDA $M$ recognizing $L$. Let $M_{a_1}, \ldots, M_{a_m}$ be DPDAs recognizing $L_{a_1},\ldots, L_{a_m}$.
    Moreover, we assume that $M_{a_i}$ never goes to an infinite loop (it is well known that such automata exist, a construction of such automata is described in~\cite{ShallitSC}), 
    so after each move we know, whether $M_{a_i}$ accepts the current input's prefix or not.
    
    
    Initially, $M$ has $Z_0$ in the stack and pushes to the stack a pair $(q_0, a_1)$ that indicates that $M$ tries to go from $q_0$ by a word from the language $L_1$. 
    If $M$ has a pair $(q_i, a_j)$ on the top of the stack, then $M$ simulates $M_{a_j}$ (starting from the current symbol over the head). When $M_{a_j}$ occurs in an accepting state $q_f$, 
    $M$ pushes to the stack the pair $\langle a_j, q_f \rangle$ that indicates that at this point  
    the simulation of $M_{a_j}$ has been paused at the state $q_f$, 
    then $M$ pushes the pair $(q_{i,j}, a_1)$ where $q_i \xrightarrow{a_j} q_{i,j}$ in the graph of $\A$ and continues the simulation.

If $M$ reached the right end marker during a simulation of $M_{a_j}$ and $M_{a_j}$ is not in an accepting state, or it is, but $q_{i,j}$ is not an accepting state of $\A$, then $M$ \emph{rollbacks}, that means the following.

During a rollback, $M$ pops from the stack all symbols used for the simulation of $M_{a_j}$ until reaches $(q_i, a_j)$.
Then it pops $(q_i, a_j)$ with pop-direction $\uto$ and pushes $(q_i, a_{j+1})$ if $j < k$. If $j = k$, then 
$M$ pops $(q_i, a_k)$, pops $\langle a_{j'}, q_f \rangle$ (if it exists) with pop direction $\uto$ and resumes the simulation of $M_{a_{j'}}$ 
(from the state $q_f$).
If during a rollback $M$ reaches $Z_0$, then $M$ rejects the input.

So, $M$ accepts the input iff during a simulation of $M_{a_j}$, $M_{a_j}$ reaches an accepting state and $q_{i,j}$ is $\A$'s accepting state as well. This condition is equivalent to $w \in L$. During the simulation, $M$ will exhaustively try all the words $u \in \Sigma^*_m$ for which possible $w \in \psi(u)$. Since $\eps\not\in L_i$, the length of $u$ is bounded by the length of $w$, so the search will terminate at some point. Since the length of the word accepted by $M_{a_j}$ during the simulation grows only if the suffix of the input cannot be accepted, none of the words from $\psi(u)$ of length at most $|w|$ would be skipped during the simulation, so the search is exhaustive.
\end{proof}

Denote by $\Gamma_{\Bool}(\CL)$ the \emph{Boolean closure} of the language's class $\CL$, i.e., $\Gamma_{\Bool}(\CL)$ is a minimal class satisfying the conditions:
\begin{itemize}
    \item  $\CL \subseteq \Gamma_{\Bool}(\CL)$ 
    \item  $\forall A, B \in \Gamma_{\Bool}(\CL): A \cup B, A\cap B, \overline{A} \in \Gamma_{\Bool}(\CL)$ 
\end{itemize}

\begin{theorem}\label{thm:main:struct} The following assertions hold.
        \begin{enumerate}
            \item $\Gamma_{\Bool}(\Gamma_{\REG}(\DCFL)) \subseteq \PEL$.
            \item  $\Gamma_{\REG}(\DCFL)\cdot\PEL = \PEL$.
        \end{enumerate}
\end{theorem}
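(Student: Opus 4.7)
The plan is to prove the two assertions separately, leveraging Lemma~\ref{Lemma:GammaRegDCFL} (which already gives $\Gamma_{\REG}(\DCFL)\subseteq\PEL$) and reducing Assertion~1 to a closure-under-Boolean-operations claim.

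For Assertion~1, I would first show that $\PEL$ is closed under union, intersection and complement. Given well-formed PEGs $G_1,G_2$ with axioms $S_1,S_2$, introduce a fresh axiom $S$ and an end-of-input wrapper $\hat{S}_i\lto S_i(\pnot\bullet)$. A direct inspection of the recursive definition of $R$ shows $R(\hat{S}_i,w)=\eps$ iff $w\in L(G_i)$ (the $\pnot\bullet$ succeeds exactly when the head has reached the right endmarker). Then the rules
\[
 S\lto\hat{S}_1\por\hat{S}_2,\quad S\lto(\&\,\hat{S}_1)\hat{S}_2,\quad S\lto(\pnot\hat{S}_1)\bullet^{*}
\]
give PEGs for $L(G_1)\cup L(G_2)$, $L(G_1)\cap L(G_2)$ and $\overline{L(G_1)}$ respectively; well-formedness is preserved because we only add new nonterminals that make bounded calls to $S_1,S_2$. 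Combining this closure with Lemma~\ref{Lemma:GammaRegDCFL} yields $\Gamma_{\Bool}(\Gamma_{\REG}(\DCFL))\subseteq\Gamma_{\Bool}(\PEL)=\PEL$.

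For Assertion~2, the inclusion $\PEL\subseteq\Gamma_{\REG}(\DCFL)\cdot\PEL$ is immediate since $\{\eps\}\in\DCFL\subseteq\Gamma_{\REG}(\DCFL)$. For the non-trivial direction, fix $X\in\Gamma_{\REG}(\DCFL)$ and $Y\in\PEL$ recognized by a DPPDA $M_Y$ coming from a well-formed PEG (so $M_Y$ always halts). I would construct a DPPDA $M$ for $XY$ by grafting the exhaustive-search machinery of Lemma~\ref{Lemma:GammaRegDCFL} onto $M_Y$ as follows. Run the search from Lemma~\ref{Lemma:GammaRegDCFL} that enumerates prefixes of the input in $X$ via the DFA $\A$ for the regular skeleton and the component DPDAs $M_1,\ldots,M_k$; each time this simulation arrives at an $\A$-accepting state (meaning the scanned prefix lies in $X$), push a special marker $\$$ at the current head position, push $M_Y$'s initial stack symbol, and simulate $M_Y$ on the remaining suffix. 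If $M_Y$ reaches the right endmarker in a final state with empty stack, accept. Otherwise, pop $M_Y$'s stack down to $\$$, pop $\$$ with direction $\uto$ so that the pointer returns the head to the split position, and resume the outer exhaustive search as a rollback.

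The main obstacle is verifying correctness and termination of the combined machine. Correctness amounts to showing that the exhaustive search visits, in some order, every split $w=uv$ with $u\in X$ and fires $M_Y$ on every such $v$, accepting iff at least one such invocation succeeds; this follows from the correctness analysis already established for Lemma~\ref{Lemma:GammaRegDCFL} together with the determinism of $M_Y$. Termination of the rollback hinges crucially on the pointer feature of DPPDA: after a failed $M_Y$ sub-simulation we must restore the head to the position at which the corresponding prefix ended, and popping $\$$ with the $\uto$ direction does exactly that. Since each $M_j$ and $M_Y$ are loop-free and $\eps\notin L_i$ in the construction of Lemma~\ref{Lemma:GammaRegDCFL}, only finitely many prefixes are ever considered, so the whole computation halts and the construction witnesses $XY\in\PEL$.
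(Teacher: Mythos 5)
Your proof is correct and follows essentially the same route as the paper: Assertion~1 via $\Gamma_{\Bool}(\PEL)=\PEL$ combined with Lemma~\ref{Lemma:GammaRegDCFL}, and Assertion~2 by grafting the suffix test of Lemma~\ref{Lemma:LeftConcatDCFL} onto the exhaustive search of Lemma~\ref{Lemma:GammaRegDCFL}. The only difference is that the paper simply cites~\cite{FordPEG2004} for the Boolean closure of $\PEL$, whereas you re-derive it explicitly with the standard $\por$, $\&$, and $\pnot$ constructions.
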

\begin{proof}
    It was shown in~\cite{FordPEG2004} that $\Gamma_{\Bool}(\PEL) = \PEL$. We proved that $\Gamma_{\REG}(\DCFL) \subseteq \PEL$, so $\Gamma_{\Bool}(\Gamma_{\REG}(\DCFL)) \subseteq \PEL$.
    
    The inclusion $\Gamma_{\REG}(\DCFL)\cdot\PEL \supseteq \PEL$ is obvious ($\{\eps\} \in \Gamma_{\REG}(\DCFL)$).
    The inclusion $\Gamma_{\REG}(\DCFL)\cdot\PEL \subseteq \PEL$ follows from the modification of the simulation algorithm from the proof of Lemma~\ref{Lemma:GammaRegDCFL} by the simulation step from the proof of Lemma~\ref{Lemma:LeftConcatDCFL}: when $M_{a_j}$ reaches an accepting state and the state $q_{i,j}$ is an accepting state of $\A$, $M$ simulates DPPDA for the PEG. If it successfully parses the suffix, the input is accepted, otherwise, the simulation continues as in the proof of Lemma~\ref{Lemma:GammaRegDCFL}.
\end{proof}

\begin{corollary}
    For each $ L \in \Gamma_{\Bool}(\Gamma_{\REG}(\DCFL))$ there exists a RAM-machine $M$ that decides, whether $w \in L$ in $O(|w|)$. In other words, the class $\Gamma_{\Bool}(\Gamma_{\REG}(\DCFL))$ is linear-time recognizable.
\end{corollary}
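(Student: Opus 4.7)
The plan is to obtain the result by chaining together the three main results already established in the paper: the structural inclusion, the equivalence of PEGs with DPPDAs, and the linear-time simulation of 2DPPDAs. Concretely, given $L \in \Gamma_{\Bool}(\Gamma_{\REG}(\DCFL))$, I would first invoke Theorem~\ref{thm:main:struct}(1) to conclude $L \in \PEL$, so there is a PEG $G$ generating $L$.

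Next, I would apply the construction from the ``PEG to DPPDA'' direction in the equivalence proof to obtain a DPPDA $M$ with $L(M) = L(G) = L$. Since every DPPDA is a $1$-way restriction of a 2DPPDA (i.e.\ a 2DPPDA that never uses the $\lto$ move), $M$ fits into the hypothesis of Theorem~\ref{theorem:CooksAnalogue}. Applying that theorem yields a RAM simulation of $M$ running in time $O(|w|)$ on input $w$, which decides membership in $L$; this is exactly the claim of the corollary.

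The argument is essentially a composition of previous results, so I do not expect any real obstacle. The only small things to double-check are that the DPPDA produced from $G$ is honestly of the form treated by Theorem~\ref{theorem:CooksAnalogue} (which it is, since 1DPPDAs are included by definition) and that the chain of reductions behind Theorem~\ref{thm:main:struct}(1) — Boolean closure via $\Gamma_{\Bool}(\PEL)=\PEL$ from~\cite{FordPEG2004} together with Lemma~\ref{Lemma:GammaRegDCFL} — is effective, so that we actually obtain a concrete PEG and hence a concrete DPPDA to feed into the simulator. Both are immediate from the constructions given.
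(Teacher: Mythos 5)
Your proposal is correct and follows essentially the same route as the paper: the paper's proof likewise invokes Theorem~\ref{thm:main:struct} to place $L$ in $\PEL$ and then appeals to the automata-based construction together with the linear-time 2DPPDA simulation of Theorem~\ref{theorem:CooksAnalogue} (merely noting in passing that pre-existing linear-time PEG parsers would also suffice). You have simply made the chain PEG $\to$ DPPDA $\to$ 2DPPDA simulation explicit, which matches the paper's intent.
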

\begin{proof}
 By Theorem~\ref{thm:main:struct}, $\Gamma_{\Bool}(\Gamma_{\REG}(\DCFL)) \subseteq \PEL$.
 There are several linear-time simulation algorithms known for PELs~\cite{TDPL_BirmanUlman, FordPEG2002}. Our automata-based construction with the presented linear-time simulation algorithm for 2DPPDA provides a constructive proof of this corollary. 
\end{proof}

\bibliography{PEG}

\end{document}